\documentclass[12pt,letterpaper]{article}
\usepackage[letterpaper,margin=1in]
{geometry}
\usepackage{indentfirst}
\usepackage{authblk}
\usepackage{setspace}
\usepackage{fancyhdr}
\usepackage{amsthm}
\usepackage{lineno} 

\usepackage{color}
\definecolor{darkred}{RGB}{100,0,0}
\definecolor{darkgreen}{RGB}{0,150,0}
\definecolor{mediumgreen}{RGB}{0,200,0}
\definecolor{darkblue}{RGB}{0,0,150}

\usepackage{soul}

\usepackage[markup=underlined]{changes}
\definechangesauthor[name={Jiyue}, color=orange]{JQ}
\usepackage{hyperref}
\hypersetup{
colorlinks =true,
citecolor  = blue, 
linkcolor = blue
}
\usepackage{amsmath, bbm, dsfont, amsfonts, xcolor, bm, amssymb}
\usepackage{censor}
\usepackage{cleveref}
\usepackage{chngcntr}
\usepackage{natbib, booktabs, multirow, array,longtable,enumitem}
\usepackage{subcaption}
\usepackage{etoc}
\usepackage{comment}

\newcommand{\indep}{\perp \!\!\! \perp}

\newcommand{\Tearly}{T^u}  
\newcommand{\Xearly}{X^u} \newcommand{\Uiptw}{U_1^{\text{iptw}}}

\newcommand{\fullest}{U^f(t;\theta^a;T_1^a, T_2^a, Z)} 
\newcommand{\est}{U^a(t;\theta^a;S_c,\pi, F)} 
\newcommand{\fullestone}{U^f(t;\theta^1; T_1^1,T_2^1, Z)}
\newcommand{\Monefa}{M_1^f(t;\theta^a;T_1^a, T_2^a)}
\newcommand{\Mthreefa}{M_3^f(t;\theta^a;T_1^a, T_2^a)}
\newcommand{\causalMc}{M_c^a}
\newcommand{\deltaearlyone}{\delta_1^u}
\newcommand{\deltaearlytwo}{\delta_2^u}
\newcommand{\deltaearlyoneone}{\delta_1^{u,1}}
\newcommand{\deltaearlytwoone}{\delta_2^{u,1}}
\newcommand{\Hearly}{H^u}

\usepackage{makecell,graphicx}

\newtheorem{lemma}{Lemma}
\newtheorem{assumption}{Assumption}
  
\title{Doubly Robust Estimation under Covariate Dependent Censoring in Semi-Competing Risks Data}
\author[1]{Jiyue Qin}
\author[2]{Yuyao Wang}
\author[1,2,3,*]{Ronghui Xu}

\affil[1]{\small Biostatistics and Bioinformatics, Herbert Wertheim School of Public Health and Human Longevity Science, University of California, San Diego}
\affil[2]{\small Department of Mathematics, University of California, San Diego}
\affil[3]{\small Halicioglu Data Science Institute, University of California, San Diego}
\affil[*]{Corresponding author: \texttt{rxu@health.ucsd.edu}}

\date{}  
\begin{document}
\maketitle
\pagenumbering{gobble}
\textbf{Running head}: AIPCW in Semi-Competing Risks
\clearpage
\section*{Abstract}

Semi-competing risks occur when individuals may experience a non-terminal event and a terminal event, where the terminal event censors the non-terminal event but not vice versa. In the presence of covariate-dependent censoring,  
augmented inverse probability of censoring weighting (AIPCW) framework has not been developed for the special setting of semi-competing risks which, unlike competing risks, have asymmetric event time structure and complex estimands. Using semiparametric theory on coarsened data, we carefully develop an AIPCW framework for semi-competing risks. When treatment effects are of interest, we further integrate this approach with augmented inverse probability of treatment weighting (AIPTW), yielding a framework for estimating causal estimands under covariate-dependent censoring. The resulting estimators are shown to be doubly robust. Using the proposed framework, we estimate treatment specific risks of: i) non-terminal event, ii) terminal event without the non-terminal event, and iii) terminal event following the non-terminal event. We evaluate the finite sample performance through simulations, and apply the method to  data from the Honolulu Asia Aging Study to assess the causal effects of midlife heavy drinking on late life cognitive impairment and mortality.




\textbf{Keywords}:  AIPCW; cumulative incidence function; 
machine learning; monotone coarsening.
\clearpage
\pagenumbering{arabic}
\setcounter{page}{1}  

\section{Introduction}

Semi-competing risks  occur frequently in medical research when an individual may experience a non-terminal event (e.g.~disease) and a terminal event (e.g.~death).  The terminal event
censors the non-terminal event but not vice versa \citep{fine2001semi}. Such data are commonly described as a three-state illness-death model (Figure \ref{fig:illness_death}), a special case of a multi-state model, where individuals start in the ``healthy" state; then they may either transition to the ``diseased" state followed by the ``dead" state, or directly transition to the ``dead" state. The model is completely specified by the three transition rates between states, also called transition intensities or hazards  \citep{xu2010statistical}.

In many applications, the treatment effect is of primary interest. A commonly targeted estimand is the average treatment effect, also known as the average causal effect. It quantifies the population-level impact of an intervention, answering the policy-relevant question \citep{rubin1974estimating,hernan2020causal}: \textit{What would be the average potential outcome in the population if everyone were treated versus if no one were treated?}  As a motivating example for semi-competing risks, consider an aging cohort study 
where researchers are interested in the causal effect of midlife heavy drinking on cognitive impairment and mortality. In this context, it is often of interest to define the causal effect as the contrast between treatment groups in the following quantities: risk of cognitive impairment, risk of death without cognitive impairment, and risk of death following cognitive impairment \citep{meira2019estimation, zhang2024marginal}. 

Beyond the complexity brought on by the semi-competing risks structure, analyses of such data face two additional challenges. First, when marginal quantities  are of interest, covariate-dependent censoring must be properly accounted for. Second, in observational studies, confounding due to non-random treatment assignment can lead to non-comparable treatment groups and biased effect estimates if not appropriately addressed.
To tackle these challenges, we develop a general framework for constructing doubly robust (DR) estimators of a broad class of causal estimands. As specific instances, we use this framework to develop novel estimators for several commonly used estimands including the risk of the terminal event following the non-terminal event. The proposed framework is flexible and can be readily applied to other causal estimands.

To handle covariate-dependent censoring, for univariate survival data, 
inverse probability  of censoring weighting (IPCW)  \citep{robins1993information} 
is commonly used, 
and  \cite{Rotnitzky2005inverse} further developed the augmented IPCW (AIPCW) method, which enjoys the so-called doubly robust property.
While the AIPCW has been extensively studied in the univariate survival setting, research extending it to more complex data structures remains limited. Competing risks data represent a special case of semi-competing risks and, 
in this setting,  \cite{lok2018estimation} directly applied the above AIPCW approach to estimation of the cumulative incidence function (CIF). However, these methods do not directly extend to semi-competing risks, which involve an asymmetric dependence structure between the event times and also give rise to more complex estimands such as the risk of the terminal event following the non-terminal event. To date, AIPCW type methods have not been systematically studied for semi-competing risks under a general framework. 
As part of this work, we carefully develop a general AIPCW framework for the semi-competing risks setting (termed AIPCW\textsuperscript{sc}), by formulating censoring as monotone coarsening and leveraging the semiparametric theory for coarsened data \citep{tsiatis2006semiparametric}. 

To account for confounding, the augmented inverse probability of treatment weighting (AIPTW) method \citep[Chapter 1]{van2003unified} has been well established. 
AIPTW methods have been broadly applied to univariate survival settings, leading to DR estimators for various causal estimands, including treatment-specific survival probability and restricted mean survival times \citep{zhang2012double,bai2013doubly, bai2017optimal, sjolander2017doubly,dukes2019doubly, hou2023treatment, luo2025doubly}.
For competing risks, \cite{lin2022doubly} and \cite{van2025doubly} extended these methods to estimate cause-specific restricted mean survival time and CIF, respectively, both under the restrictive assumption of independent censoring. Collectively, these studies focus on specific estimands and do not readily extend to the more general semi-competing risks setting. To the best of our knowledge, no prior work has developed DR estimators for treatment effects in semi-competing risks data under covariate-dependent censoring.

To address the above gap, we build on our proposed AIPCW\textsuperscript{sc} framework and introduce the causal-AIPCW\textsuperscript{sc} framework, which integrates confounding adjustment via AIPTW. The resulting estimators involve two sets of nuisance models: one set contains models for the non-terminal event time and the terminal event time, and the other set contains models for censoring time and propensity score (PS). The approach is DR in the population sense: the true estimand is identified if either of the data generating nuisance models is known. 
Under regularity conditions then, the resulting estimator is 
(termed \textit{model DR}) consistent and asymptotically normal (CAN) if either set of the nuisance models is correctly specified and both sets of nuisance estimators are asymptotically linear (e.g., when parametric or semiparametric models are used). Besides model DR, recent literature has also established the \textit{rate DR} property of such augmented inverse probability weighting (AIPW) estimators \citep{rotnitzky2021characterization, hou2023treatment, rava2023doubly, luo2025doubly}. Rate DR ensures consistency and asymptotic normality when both sets of the models converge to the truth but at possibly slower than root-$n$ rates, as long as their product rate is faster than root-$n$, a property particularly attractive when using nonparametric machine learning methods for nuisance estimation. 

Our method provides a general framework to obtain DR estimators for treatment effects in semi-competing risk settings under covariate-dependent censoring, formulated using potential outcomes and applicable to various causal estimands. In this paper, we use this framework to estimate treatment-specific cumulative transition rates. From these, we further estimate treatment-specific risks of i) the non-terminal event, ii) the terminal event without the non-terminal event, and iii) the terminal event following the non-terminal event. The causal effect is then naturally defined and estimated as the contrast between treatment groups in these risk quantities. 

The novelty of our work is multifaceted. First, we develop a unified DR framework that enables causal inference in semi-competing risks data under covariate-dependent censoring. This framework is flexible and accommodates a broad class of estimands. Second, we use the framework to estimate cumulative transition rates and subsequently, the relevant risk quantities. Estimation of cumulative transition rates is particularly useful since other estimands can all be expressed as functions of them \citep{meira2019estimation}, positioning our work as a foundation for broader estimation goals.  Finally, most prior studies only considered parametric and semiparametric methods for nuisance estimation while we also consider nonparametric machine learning methods to showcase the \textit{rate DR} property.

The remainder of the paper is organized as follows. Sections \ref{sec:aipcw} and \ref{sec:causal-AIPCW} present our AIPCW\textsuperscript{sc} and causal-AIPCW\textsuperscript{sc} frameworks, including notations, assumptions and their DR properties. In Section \ref{sec:estimation} we use the general framework to construct estimators of aforementioned causal estimands.  Section \ref{sec:simulation} evaluates the finite-sample performance of the proposed method through simulations. In Section \ref{sec:application} we apply our method to data from the Honolulu Asia Aging Study to investigate the causal effect of midlife heavy drinking on late-life cognitive impairment and mortality. Section \ref{sec:discussion} concludes with some remarks. Proofs, additional technical details and simulation results are provided in the Supplementary Material.

\section{AIPCW\textsuperscript{sc} via monotone coarsening}\label{sec:aipcw}
\subsection{Notations and assumptions}
We first describe our AIPCW\textsuperscript{sc} framework, which can be applied independently to addressing covariate-dependent censoring in a broad setting and serves as a stepping stone to the causal-AIPCW\textsuperscript{sc} framework described subsequently.  

Let $T_1$ and $T_2$ denote time to the non-terminal event  and terminal event, respectively. If an individual experiences the terminal event before the non-terminal event occurs, then  $T_1 = \infty$ by convention (see e.g., 
\citealt{xu2010statistical, zhang2024marginal}). 
Note that the random vector  $(T_1, T_2)$ takes values on $\{(t_1,t_2): 0< t_1 \leqslant t_2<\infty \text{ or } t_1=\infty, t_2>0 \}$ (Figure \ref{fig:joint_density}). That is, there is no probability mass in the lower wedge $t_2<t_1<\infty$. The joint distribution of $T_1$ and $T_2$ in this case is completely characterized by the following 
three transition rates:
\begin{align}
\lambda_1(t)&= \lim_{\Delta \to 0^+} { P (T_1 \in [t, t +\Delta) | T_1 \geqslant t, T_2 \geqslant t )}/{\Delta},\\
\lambda_2(t)&= \lim_{\Delta \to 0^+} {  P (T_2 \in [t, t +\Delta) | T_1 \geqslant t, T_2 \geqslant t )}/{\Delta},\\
\lambda_{3} (t | t_1)&= \lim_{\Delta \to 0^+}  {P(T_2 \in [t, t+\Delta) | T_1 = t_1, T_2 \geqslant t) }/{\Delta}.
\end{align}
Note that $\lambda_1(t)$ and $\lambda_2(t)$ are cause-specific hazard functions in the usual competing risks setting, for time to the non-terminal event and time to the terminal event without non-terminal event, respectively. The third transition rate,  $\lambda_3(t|t_1)$,  is a hazard function of the terminal event conditional on time to the non-terminal event being $t_1$. 
We define the following three cumulative transition rates:
$\Lambda_j(t) = \int_{0}^t\lambda_j(u)du$ for $j =1,2$, and $ \Lambda_3(t| t_1) = \int_{0}^t\lambda_3(u| t_1)du$.

Let $T = \min(T_1, T_2)$. Let $C$ denote the censoring time and $Z$ the baseline covariates. Let $X_1= \min(T, C)$, $X_2 = \min(T_2, C)$, $\delta_1= \mathbbm{1} (X_1=T_1)$ and $\delta_2 = \mathbbm{1}(X_2 = T_2)$ where $\mathbbm{1}(\cdot)$  denotes the indicator function.
We observe $O=(X_1, X_2, \delta_1, \delta_2, Z)$. 
We adopt the following counting process notation for $C$: 
$N_{c}(t)=\mathbbm{1}( X_{2} \leqslant t, \delta_{2}=0)$, 
$Y_{c}(t)=\mathbbm{1}( X_{2} \geqslant t)$. 
Then $M_c(t ; S_c)=N_{c}(t)-\int_{0}^{t} Y_{c}(r) \lambda_{c}(r| Z) dr$
is a martingale  with respect to the 
filtration $\{\mathcal{F}_t\}_{t\geqslant 0}=\sigma(Z,\mathbbm{1}(T_2\leqslant \mu), \mathbbm{1}(C\leqslant \nu):\mu \leqslant t,\nu \leqslant t)$,  
where $S_c$ and $\lambda_c$ are the conditional survival and hazard functions of $C$ given $Z$, respectively.  
Let $S$  denote the conditional survival function of  $T$ given $Z$, and $F(t_1, t_2|Z)$ the conditional cumulative distribution function (CDF) of $(T_1, T_2)$ given $Z$.

We are interested in a parameter $\theta$ indexing the distribution of the censoring-free data  $(T_1, T_2, Z)$, that can be identified from a censoring-free data estimating function possibly dependent on the time $t$.  Specifically, we assume the following, where the superscript ``$o$'' denotes the true value of the parameter.

\renewcommand{\theassumption}{\Alph{assumption}}

\newlist{subassumption}{enumerate}{1}
\setlist[subassumption,1]{
  label=\textup{\arabic*.},
  ref=\theassumption\arabic*}

\begin{assumption}\label{ass:esteq}
$U^c (t;\theta)\equiv U^c (t;\theta, T_1, T_2,Z )$ is a censoring-free data  estimating function such that $\theta^o$ is the unique solution to  $E\{U^c(t;\theta)\} = 0$. 
\end{assumption}

For example, consider the estimation of $\theta = \Lambda_1(t)$. In this case, 
$U^c(t; \theta) = dM_1^c(t;\theta)$, where $ M_1^c(t;\theta) = \mathbbm{1}(T \leqslant t, T_1\leqslant T_2)-\int_0^t\mathbbm{1}(T\geqslant u) d\Lambda_1(u)$ is a martingale with respect to the natural history filtration 
\citep{andersen1991non}.

We additionally assume the following,  where $\indep$ denotes statistical independence.

\begin{assumption}
\label{ass:aipcw} 
We have
\begin{subassumption}
  \item \label{ass:aipcw_censoring} (conditional independent censoring)
$C\indep (T_1, T_2)\mid Z$.  
\item \label{ass:aipcw_positivity} (strict positivity)
There exists $0 < \epsilon < 1$ such that $S_c(\tau |Z ) > \epsilon$ almost surely (a.s.) and $S(\tau| Z) > \epsilon$ a.s., where $\tau$ is the maximum follow-up time.
\end{subassumption}
\end{assumption}
The above assumptions are commonly used (see e.g., \citealt{luo2025doubly}), where Assumption \ref{ass:aipcw_censoring} 
allows the censoring time to depend on covariates and Assumption \ref{ass:aipcw_positivity} ensures that the distributions of $C$ and $T$ can be identified up to the maximum follow-up time. 

\subsection{Censoring as monotone coarsening}

We develop the AIPCW\textsuperscript{sc} framework by casting 
the problem of censoring as monotone coarsening under the semi-competing risks setting. While this has been done for univariate survival outcomes \citep{tsiatis2006semiparametric, Rotnitzky2005inverse}, to our best knowledge it has not been considered for semi-competing risks. 
Under the coarsened data framework \citep{heitjan1991ignorability, gill1997coarsening},
instead of the 
full data  that we wish to observe, we observe only a \emph{coarsened} version of it, i.e.~a many-to-one function of it. 
Under the assumption of coarsening at random (CAR), to be defined below, semiparametric theory for coarsened data provides a systematic way for constructing CAN estimators and for improving efficiency \citep{tsiatis2006semiparametric}. As described therein, any observed data influence function is the sum of an IPW transformation of a full data influence function plus an augmentation term, and
one can obtain the most efficient observed data influence function within the class for a given full data influence function by 
projecting the IPW term onto the augmentation space.
The resulting influence function has improved efficiency over the IPW 
because this optimal augmentation term uses information available from coarsened individuals rather than relying solely on IPW among fully observed individuals.
Under the special case of {\it monotone coarsening}, where the coarsening probability can be modeled using the hazard of coarsening, this {optimal} augmentation term takes a closed form, without the need of successive approximation otherwise \citep[Chapter 10]{tsiatis2006semiparametric}. 
Such augmented estimators,  referred to as augmented inverse probability weighted complete-case (AIPWCC) estimators, often enjoy double robustness.  
In addition,
when the full data model is nonparametric,  the full data influence function is unique, and the influence function constructed in the above approach is the efficient influence function. 

In the semi-competing risks setting considered in this paper, censoring can be formulated as a coarsening mechanism. We show below that this mechanism is indeed monotone and that Assumption  \ref{ass:aipcw_censoring}  implies CAR. Consequently, the semiparametric theory for monotone coarsened data described above can be applied to deriving the AIPCW\textsuperscript{sc} estimating function, which improves efficiency and turns out to be doubly robust. We present the key steps in our development of  AIPCW\textsuperscript{sc} as follows, with details in  Supplementary Material.

First, to formulate our censoring problem as coarsening, 
let $\Tearly$ 
denote the earliest follow-up time at which the value of the estimating function
$U^c(t;\theta)$ becomes fully observed; that is, 
$U^c(t;\theta)$ depends on $(T_1, T_2)$ only through 
$(\Tearly, T_1\wedge \Tearly, \deltaearlyone, \deltaearlytwo)$, where $\deltaearlyone = \mathbbm{1}(T_1\leqslant \Tearly)$ and $\deltaearlytwo = \mathbbm{1}(T_2\leqslant \Tearly)$. Therefore,
$
U^c(t;\theta) = U^c (t;\theta, T_1, T_2,Z ) = U^c(\theta; W)$, where $W=(\Tearly, T_1\wedge \Tearly, \deltaearlyone, \deltaearlytwo, Z)$.


Continuing with the example  where  $\theta =\Lambda_1(t)$ and  
$U^c(t; \theta) =   \mathbbm{1}(T = t, T_1\leqslant T_2) - \mathbbm{1}(T\geqslant t) d\Lambda_1(t)$,
we have 
$\Tearly = \min(T,t)$ and  $U^c(\theta; W) = \mathbbm{1}(\Tearly= t)\{\deltaearlyone - d\Lambda_1(t)\}$. 
To see why $\Tearly$ takes such form, note that $\{C>\min(T,t)\} = \{C>T\}\cup \{t<C\leqslant T\}$. When $C>T$, we observe $T$ and  $\mathbbm{1}(T_1\leqslant T_2)$ and thus we can evaluate $U^c$; when $t<C\leqslant T$, we do not observe $T$ but know that  no event has occurred by time $t$ and thus $U^c = 0$. On the other hand, when $C<\min(T,t)$, we  do not know whether an event would have occurred by time $t$ and thus cannot evaluate $U^c$.

In general, $\Tearly$ is a function of $(T_1,T_2,t)$ and its form depends on the specific form of $U^c(t;\theta)$. 
While in the example above 
$\Tearly = \min(T, t)$ 
is the same as in the usual competing risks setting, this is not always the case and 
$\Tearly$ may exceed $\min(T, t)$. One such example is $\theta = \Lambda_3(t)$ 
in Section \ref{sec:estimation}, for which $\Tearly = \min(T_2, t)$. 
Since any $U^c$ of interest is always observed when $C>T_2$, we always have $\Tearly\leqslant T_2$. Under the coarsened data framework \citep[Chapter 9]{tsiatis2006semiparametric},  
$W$ is the data that we wish to observe and $(R, G_R(W))$ denotes the coarsened data  we actually observe. Here, $R$ is the coarsening variable such that when $R=r$ ($0<r<\infty$), we observe a many-to-one mapping $G_r(W)$, instead of $W$ itself. By convention, $R=\infty$ denotes no coarsening. 
Specifically, we define $(R, G_R(W))$ as: 
\begin{align}\label{eq:monocoar}
& \left\{\begin{array}{l}
\text { if } C  >  \Tearly , \text { then } R=\infty, G_{\infty}(W)=W; \\
\text { if } C=r \leqslant \Tearly, \text { then } R=r, G_{r}(W)= \{\mathbbm{1}(\Tearly\geqslant  r),Z\}.
\end{array} \right. 
\end{align}
Indeed, if $C> \Tearly$, then $\Tearly$ is observed and the value of $U^c$ can be determined;  and the second line in \eqref{eq:monocoar} reflects the fact that when censoring occurs at time $r$ before $\Tearly$ is reached, the value of $U^c(t;\theta)$ is not observed, and the only information we have is that $\Tearly \geqslant r$. This formulation differs from that in \citet{tsiatis2006semiparametric} in that our formulation applies to the semi-competing risks setting, 
and 
because $\Tearly$ depends on $t$, the resulting  weights are also time dependent. 
In contrast, the formulation in \citet{tsiatis2006semiparametric} uses a coarsening variable that is not indexed by $t$, which leads to  weights that do not vary with time. Note that time dependent weights are  also used in the AIPCW method of \cite{Rotnitzky2005inverse} for univariate survival data and can make more efficient use of the data in general.

Next, to apply the semiparametric theory on monotone coarsened data under CAR, we first provide the precise definitions of these two concepts. 
Monotone coarsening assumes that for any $r<s$, there exists a fixed function $f_{r,s}$  such that $G_r(W) = f_{r,s}(G_s(W))$, while  
CAR assumes that the distribution of the coarsening variable depends on censoring-free data only as a function of the observed data: 
 $P(R=r| W) = w(r,G_r(W))$.  
Here $P(R=r|W)$ denotes the conditional probability mass when $r=\infty$ and the conditional probability density when $r<\infty$. 
In the Supplementary Material we show that 1)
when formulating semi-competing risks data subject to censoring  as coarsened data in \eqref{eq:monocoar}, such coarsening mechanism is monotone, 
and  2) Assumption \ref{ass:aipcw_censoring} 
implies CAR.  
Following \citet[Chapter~9]{tsiatis2006semiparametric} then, the AIPW estimating function for monotone coarsened data under CAR has an IPW term plus an augmentation term:
\begin{equation}\label{eq:coarsen}
\frac{\mathbbm{1}(R=\infty) U^{c}(\theta;W)}{P(R=\infty|W)}+ \int_{0}^\infty \frac{L_r(G_{r}(W))}{K_{r}(W)}dM_R(r),
\end{equation} 
where $M_R(r) = \mathbbm{1}(R\leqslant r)- \int_0^r \mathbbm{1}(R \geqslant u)\lambda_u(W)du$ is a martingale \cite[Theorem 9.2]{tsiatis2006semiparametric}, 
$\lambda_r(W) = \lim_{\Delta \to 0^+}  {P(R \in [r, r+\Delta) | R\geqslant r, W) }/{\Delta}$, $K_r(W)= P(R>r | W)$ and  $L_r(G_{r}(W))$ is an arbitrary function of $G_{r}(W)$, 
yielding a class of observed data estimating functions corresponding to this particular censoring-free data estimating function $U^c$.
Note that the original presentation in \citet[Chapter~9]{tsiatis2006semiparametric} is given for discrete $R$ where the augmentation term involves a sum over $r$, while  \eqref{eq:coarsen} is the natural continuous time analog of that original formulation.

We had earlier that  $ \mathbbm{1}(R=\infty) = \mathbbm{1}(\Tearly< C)$, and $ U^c(\theta; W) = U^c(t;\theta) $. 
For the rest of the quantities in \eqref{eq:coarsen} we show in the Supplementary Material that 
 $P(R=\infty | W) = S_c(\Tearly | Z)$,  
$K_r(W)= S_c(\Tearly \wedge r| Z)$  where $ \Tearly \wedge r = \min(\Tearly, r) $, and $d M_R(r) = \mathbbm{1}( \Xearly\geqslant r) dM_{c}(r)$ where $\Xearly=\min(\Tearly, C)$.
By analogy with Theorem 10.4 in \cite{tsiatis2006semiparametric} 
but for continuous time, we obtain within the class of observed data estimating functions the most efficient  estimator  by taking $L_r(G_{r}(W))=E\{U^{c}(t ; \theta) | \Tearly \geqslant r, Z\} \equiv L(r, t ; \theta ; F ; Z)$.
Formula \eqref{eq:coarsen} then gives
 the following AIPCW\textsuperscript{sc} estimating function:  
\begin{equation}\label{eq:aipcw}
U(t;\theta;S_c, F)= 
\frac{\mathbbm{1}(\Tearly< C) U^c(t ; \theta)}{S_c(\Xearly | Z)}
+\int_0^{\Xearly} \frac{L(r, t ; \theta ; F;Z)}{S_c(r | Z)} d M_c(r ; S_c),
\end{equation}
where $M_c(t ; S_c)=N_{c}(t)-\int_{0}^{t} Y_{c}(r) \lambda_{c}(r| Z) dr$ as defined before. 
Again it consists of two components: an IPCW term, which weights each uncensored individual by the inverse probability of being uncensored, 
and an augmentation term which represents the 
contribution to the estimation from a censored subject.
Note that \eqref{eq:aipcw} involves two nuisance parameters $S_c$ and $F$. 
Also note that setting $T_1=\infty$ in \eqref{eq:aipcw} returns the univariate AIPCW in \cite{Rotnitzky2005inverse},  where the single survival outcome is the terminal event.

The following lemma shows that $\theta$ can be identified and $U(t;\theta;S_c, F)$ is a DR estimating function (proof in 
Supplementary Material). 

\begin{lemma}[Identifiability and double robustness of AIPCW\textsuperscript{sc}]\label{lemma:DR}
Under Assumption \ref{ass:aipcw},\\
$E\{U(t;\theta;S_c, F)\} = E\{U^c(t;\theta)\}$ for all $\theta$, 
if either $F= F^o$ or $S_c=S_c^o$; hence 
$\theta^o$ is the unique solution to $E\{U(t;\theta;S_c, F)\} = 0$ under
Assumption \ref{ass:esteq}. 
\end{lemma}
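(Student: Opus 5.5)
The plan is to show, separately in the two cases, that the augmentation-corrected estimating function has the same mean as the censoring-free one. Throughout, write $U=U(t;\theta;S_c,F)$ and $U^c=U^c(t;\theta)$, and work conditionally on $W=(\Tearly, T_1\wedge\Tearly,\deltaearlyone,\deltaearlytwo,Z)$ or on $Z$ as needed. I will rely on three facts stated earlier and established in the Supplementary Material: the coarsening \eqref{eq:monocoar} is monotone; Assumption \ref{ass:aipcw_censoring} implies CAR; and $P(R=\infty|W)=S_c^o(\Tearly|Z)$, $K_r(W)=S_c^o(\Tearly\wedge r|Z)$, $dM_R(r)=\mathbbm{1}(\Xearly\geqslant r)\,dM_c(r)$. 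Assumption \ref{ass:aipcw_positivity} keeps every weight $1/S_c$ bounded on $[0,\tau]$, so all expectations are finite and the martingale integrals are legitimate.

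\textbf{Case $S_c=S_c^o$ (arbitrary $F$).} Since $C\indep (T_1,T_2)\mid Z$, we have $E\{\mathbbm{1}(\Tearly<C)\mid W\}=S_c^o(\Tearly|Z)$, so the IPCW term has conditional mean $U^c$. The augmentation term has the form $\int_0^\infty \{L(r,t;\theta;F;Z)/S_c^o(r|Z)\}\,\mathbbm{1}(\Xearly\geqslant r)\,dM_c(r;S_c^o)$. Because $\Tearly$ is a function of the event times, it is fixed given $(T_1,T_2,Z)$. Hence $\mathbbm{1}(\Tearly\geqslant r)$ is known at time $0$ under the filtration enlarged by $(T_1,T_2,Z)$, and $\mathbbm{1}(C\geqslant r)$ is predictable. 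Under conditional independent censoring, $M_c(\cdot;S_c^o)$ is still a martingale with respect to this enlarged filtration, since the hazard of $C$ given $(T_1,T_2,Z)$ equals $\lambda_c^o(\cdot|Z)$. The integrand is predictable and bounded, so the augmentation term has mean zero. This gives $E(U)=E(U^c)$.

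\textbf{Case $F=F^o$ (arbitrary $S_c$).} Here $L^o(r)=E\{U^c\mid \Tearly\geqslant r,Z\}$ is the true conditional expectation. The strategy is the standard telescoping argument for monotone-coarsening AIPW estimating functions. First rewrite $U-U^c$ as a stochastic integral in $r$. Using $d\{1/S_c(r)\}=\lambda_c(r)\,dr/S_c(r)$ and $\mathbbm{1}(\Tearly<C)/S_c(\Xearly)=1-\int_0^{\Xearly}\{dN_c(r)-\lambda_c(r)\,dr\}/S_c(r)$ on the relevant set (the usual identity), one obtains
\[
U-U^c=-\int_0^{\infty}\frac{\mathbbm{1}(\Xearly\geqslant r)}{S_c(r|Z)}\{U^c-L^o(r)\}\,dM_c(r;S_c).
\]
Then take the conditional expectation given $Z$ and $C$, that is, integrate over $(T_1,T_2)$ while $C$ is fixed. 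The factor $\mathbbm{1}(C\geqslant r)$ and the $S_c$-dependent quantities are fixed, and $\mathbbm{1}(\Tearly\geqslant r)\{U^c-L^o(r)\}$ has conditional mean zero by the definition of $L^o$ together with independence of $C$ and $(T_1,T_2)$ given $Z$. A point mass $dN_c(r)$ at $r=C$ requires care: on $\{C=r\leqslant\Tearly\}$ we still need $E[\mathbbm{1}(\Tearly\geqslant r)\{U^c-L^o(r)\}\mid Z,C=r]=0$, which holds again by independence. So $E(U-U^c)=0$.

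\textbf{Conclusion and main obstacle.} In both cases $E\{U(t;\theta;S_c,F)\}=E\{U^c(t;\theta)\}$ for every $\theta$. Uniqueness of the root then follows immediately from Assumption \ref{ass:esteq}. I expect the main obstacle to be the algebraic rewriting of $U-U^c$ in the second case. That step must handle boundary and tie conventions carefully ($\Tearly<C$ versus $\leqslant$, and left-continuity of $S_c$ at $\Xearly$). It must also cope with the dependence of $\Tearly$ on $t$ and with possible atoms of $U^c$ at $\Tearly=t$. I would first verify the identity pathwise for the two events $\{C>\Tearly\}$ and $\{C\leqslant\Tearly\}$ separately.
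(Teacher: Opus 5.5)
Your proposal is correct and follows essentially the paper's proof. For $S_c=S_c^o$ you show the IPCW term has conditional mean $U^c$ and that the augmentation term is a predictable integral against a censoring martingale with mean zero. Here you enlarge the filtration by $(T_1,T_2,Z)$ at time $0$, while the paper uses $\sigma(Z,\mathbbm{1}(\Tearly\leqslant\mu),\mathbbm{1}(C\leqslant\nu):\mu,\nu\leqslant s)$; either works. For $F=F^o$ you use the same identity $\mathbbm{1}(\Tearly<C)/S_c(\Xearly|Z)=1-\int_0^{\Xearly}dM_c(r;S_c)/S_c(r|Z)$ as the paper, then condition on $(C,Z)$.

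One step you should make explicit is why the integrator is fixed given $(C,Z)$, even though $N_c$ and $Y_c$ involve $T_2$. Since $\Xearly\leqslant\Tearly\leqslant T_2$, one has $\mathbbm{1}(\Xearly\geqslant r)\,dM_c(r;S_c)=\mathbbm{1}(\Xearly\geqslant r)\,d\tilde M_c(r;S_c)$, where $\tilde M_c(s;S_c)=\mathbbm{1}(C\leqslant s)-\int_0^s\mathbbm{1}(C\geqslant u)\lambda_c(u|Z)\,du$. This is exactly how the paper justifies that step.
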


\section{Causal-AIPCW\textsuperscript{sc}}\label{sec:causal-AIPCW}
We now consider estimating treatment effects. Following the potential outcomes framework, let  $A =0, 1$ be the binary treatment assignment, possibly not randomized; $Z$ be a vector of covariates; $T_1^a$, $T_2^a$ and $C^a$ be the potential non-terminal event time, terminal event time, and censoring time had the subject received treatment $a$, where $a=0,1$. Let $\lambda_1^a(t)$, $\lambda_2^a(t)$, $\lambda_3^a(t|t_1)$ denote the transition rates corresponding to the counterfactual states for $a=0,1$. Let $T^a = \min (T_1^a, T_2^a)$. Let $T$, $T_1$, $T_2$ and $C$ be their corresponding ``factual" counterparts once the treatment is received. The variables $X_1$, $X_2$, $\delta_1$, $\delta_2$ are defined the same as in the previous section. With a slight abuse of notation, we now observe $O=(X_1,X_2,\delta_1, \delta_2,A,Z)$. Let $S_c, \lambda_c$,  $S$ and $F$ denote the conditional survival function of $C$ given $(A, Z)$, the conditional hazard function of $C$ given $(A, Z)$,  the conditional survival function of  $T$ given $(A, Z)$ and the conditional CDF of $(T_1, T_2)$ given $(A, Z)$, respectively. Let $\pi = P(A=1|Z)$ denote the propensity score. 

In the hypothetical world without  censoring and with fully observed potential outcomes, we have the full data $(T_1^a,T_2^a,Z)$, $a=0, 1$. The goal is to estimate $\theta^a$, an estimand related to the distribution of the potential outcomes $ (T_1^a,T_2^a) $, that can be identified from a full data estimating function  $\fullest$. Specifically, we assume the following.
\begin{assumption}\label{ass:causal_esteq}
    $\theta^{a,o}$ is the unique solution to $E\{ \fullest\} = 0$.
\end{assumption}

For example, if we are interested in estimating 
$\theta^a = \Lambda_1^a(t)$, then $\fullest = d\Monefa$ where $\Monefa = \mathbbm{1}(T^a \leqslant t, T_1^a\leqslant T_2^a)-\int_0^t\mathbbm{1}(T^a\geqslant u) d\Lambda_1^a(u)$. 

We use the following additional assumptions, which are standard in causal inference 
under covariate-dependent censoring (see e.g., \citealt{hernan2020causal, luo2025doubly}): 
\begin{assumption}
\label{ass:c-aipcw} 
We have
\begin{subassumption}
  \item \label{ass:causal_sutva}(stable unit treatment value assumption, SUTVA)
There is only one version of treatment and there is no interference among the subjects.
\item \label{ass:causal_consistency}(consistency)
If $A=a$, then $T_1=T_1^{a}, T_2=T_2^{a}, C=C^{a}$, for $a=0,1$.
\item \label{ass:causal_exchange} (exchangeability)
 $(T_1^a, T_2^{a}, C^{a})\indep A \mid Z$, for $a=0,1$.  
 \item \label{ass:causal_positive}(strict positivity)
There exists $0 < \epsilon < 1$ such that $\epsilon < \pi(Z)  < 1 - \epsilon$ a.s., $S_c(\tau |A,Z ) > \epsilon$ a.s. and $S(\tau| A , Z) > \epsilon$ a.s., where $\tau$ is the maximum follow-up time. 
\item 
\label{ass:causal_censoring} (conditional independent censoring)
 $(T_1^a, T_2^{a}) \indep C^{a} \mid  Z$, for $a=0,1.$   
\end{subassumption}
\end{assumption}
     
We now develop an approach that maps a  full data estimating function to an observed data estimating function, referred to as the causal-AIPCW\textsuperscript{sc} method.
To derive the causal-AIPCW\textsuperscript{sc} estimating function, we first apply the AIPTW method as described in \cite{van2003unified} to $\fullest$, to obtain estimating functions $U_a^\text{aiptw}(t;\theta^a;\pi, F)$ for the censoring-free data $(T_1,T_2,A,Z)$. Details of this step are provided in Supplementary Material. Under our assumptions, we have $C\indep (T_1,T_2)\mid (A,Z)$. Thus, we can apply our AIPCW\textsuperscript{sc} method from the previous section to $U_a^\text{aiptw}(t;\theta^a;\pi, F)$, in order to account for censoring. The resulting causal-AIPCW\textsuperscript{sc} estimating function for $\theta^a$ is:
\begin{align}\label{eq:c-aipcw}
U^a(t;\theta^a;S_c,\pi, F)&= 
\frac{A^a(1-A)^{1-a}}{\pi(Z)^a\{1-\pi(Z)\}^{1-a}}\frac{ \mathbbm{1}(\Tearly < C)} {S_c(\Xearly | A=a, Z)} U^f(t ; \theta^a) \nonumber\\
&\quad 
+\left[1-\frac{A^a(1-A)^{1-a}}{\pi(Z)^a\{1-\pi(Z)\}^{1-a}} \right] E\{U^f(t ; \theta^a) | A=a, Z\}\nonumber\\
+ & \frac{A^a(1-A)^{1-a}}{\pi(Z)^a\{1-\pi(Z)\}^{1-a}} \int_0^{\Xearly} \frac{E\{U^f(t;\theta^a) | \Tearly \geqslant r, A=a, Z\}}{S_c(r | A=a, Z)} d\causalMc(r; S_c),
\end{align}
where $U^f(t;\theta^a) \equiv U^f(t;\theta^a; T_1,T_2, Z)$, 
$\Tearly$ denotes the earliest time point such that the estimating function $U^f(t;\theta^a)$ can be observed,  and 
$\causalMc(t ; S_c)=\mathbbm{1}(X_2\leqslant t, \delta_2 = 0)-\int_0^t \mathbbm{1}(X_2 \geqslant u) \lambda_c(u | A,Z) du$. 
Formula \eqref{eq:c-aipcw} consists of three components: the first is the inverse probability weighted (IPW) term, which weights each treated and uncensored individual by the inverse probability of receiving their treatment and remaining uncensored; the other two are augmentation terms, capturing the contributions from i) a participant who did not receive the treatment and ii) a participant who received the treatment but was censored, respectively. It involves three nuisance parameters: $S_c$, $\pi$ and $F$. 
Note that \eqref{eq:c-aipcw} can also be derived by formulating the problem as monotone coarsening, with treatment introducing an additional layer of coarsening. See Supplementary Material 
for the exact formulation. From this perspective, the causal-AIPCW\textsuperscript{sc} estimator also belongs to the broader class of AIPWCC estimator.

The following lemma establishes that $\theta^a$ can be identified and $\est$ is a DR estimating function (proof in 
Supplementary Material).

\begin{lemma}[Identifiability and double robustness of causal-AIPCW\textsuperscript{sc}]\label{lemma:causalDR}
Under Assumption \ref{ass:c-aipcw}, 
for $a=0,1$, $E\{U^a(t;\theta^a;S_c,\pi, F)\} =E\{\fullest\}$ for all $\theta^a$ if either 1) $F= F^o$ or 2) $S_c=S_c^o$ and $  \pi =\pi^o$; hence 
$\theta^{a,o}$ is the unique solution to $E\{\est\} = 0$ under Assumption \ref{ass:causal_esteq}. 
\end{lemma}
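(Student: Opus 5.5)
We prove the identity $E\{U^a\}=E\{U^f(t;\theta^a;T_1^a,T_2^a,Z)\}$ by conditioning on $(A,Z)$ and handling the censoring layer with Lemma~\ref{lemma:DR}. Write $w_a(A,Z)=A^a(1-A)^{1-a}/[\pi(Z)^a\{1-\pi(Z)\}^{1-a}]$, and let $\pi^o$, $S_c^o$, $F^o$ be the truths. Set $m_a(Z;F)=E_F\{U^f(t;\theta^a)\mid A=a,Z\}$. Consistency (Assumption~\ref{ass:causal_consistency}) and exchangeability (Assumption~\ref{ass:causal_exchange}) give
\[
m_a(Z;F^o)=E\{U^f(t;\theta^a;T_1^a,T_2^a,Z)\mid Z\}.
\]
Combining Assumptions~\ref{ass:causal_censoring}, \ref{ass:causal_exchange} and \ref{ass:causal_consistency}, on $\{A=a\}$ we also get $C\indep(T_1,T_2)\mid (A=a,Z)$. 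This is exactly Assumption~\ref{ass:aipcw_censoring} conditional on $A=a$, and positivity holds by Assumption~\ref{ass:causal_positive}.

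\textbf{Censoring layer.} Group the first and third terms of \eqref{eq:c-aipcw} as $w_a(A,Z)\,V_a$, where $V_a$ is exactly the AIPCW\textsuperscript{sc} function \eqref{eq:aipcw} applied to $U^f$ within stratum $A=a$. The proof of Lemma~\ref{lemma:DR} is a pointwise-in-$Z$ calculation, so it applies conditionally on $(A=a,Z)$. I will make this explicit. The IPCW part has conditional expectation $E[U^f\, S_c^o(\Tearly\mid a,Z)/S_c(\Tearly\mid a,Z)\mid a,Z]$. The augmentation part is a stochastic integral against $dM_c$. When $S_c=S_c^o$, $M_c$ is a martingale in the filtration containing $(Z,A)$ and the history of $T_2$ and $C$, and the integrand is predictable, so its mean is zero. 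When $F=F^o$, the sum of the two parts telescopes to $E\{U^f\mid a,Z\}$. One can do this by writing $\mathbbm{1}(\Tearly<C)/S_c(\Xearly)=1-\int_0^{\Xearly} dM_c/S_c$ under a possibly misspecified $\lambda_c$, and then using $E\{U^f\mid \Tearly\ge r,a,Z\}$ together with the CAR structure. In either case we obtain $E(V_a\mid A=a,Z)=m_a(Z;F^o)$, provided $F=F^o$ or $S_c=S_c^o$.

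\textbf{Treatment layer.} Taking expectation given $Z$, and using $E\{w_a(A,Z)\mid Z\}=\pi^o(Z)^a\{1-\pi^o(Z)\}^{1-a}/[\pi(Z)^a\{1-\pi(Z)\}^{1-a}]\equiv\rho(Z)$, gives
\[
E\{U^a\mid Z\}=\rho(Z)\,E(V_a\mid A=a,Z)+\{1-\rho(Z)\}\,m_a(Z;F).
\]
\emph{Case 1)}, $F=F^o$: here $E(V_a\mid a,Z)=m_a(Z;F^o)$ regardless of $S_c$, so the right side equals $m_a(Z;F^o)$ for any $\pi$. \emph{Case 2)}, $S_c=S_c^o$ and $\pi=\pi^o$: here $\rho\equiv1$, and $E(V_a\mid a,Z)=m_a(Z;F^o)$ from the censoring step. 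Taking the outer expectation over $Z$ gives $E\{U^a\}=E\{U^f(t;\theta^a;T_1^a,T_2^a,Z)\}$ for all $\theta^a$. Uniqueness of $\theta^{a,o}$ then follows from Assumption~\ref{ass:causal_esteq}.

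The main obstacle is the conditional version of Lemma~\ref{lemma:DR} with misspecified $S_c$ but correct $F$. This requires the identity $\mathbbm{1}(\Tearly<C)/S_c(\Xearly)+\int_0^{\Xearly}dM_c(r;S_c)/S_c(r)=1$ for an arbitrary hazard $\lambda_c$, with $M_c$ built from that hazard. It is a Duhamel/product-integral identity, and I would verify it directly by differentiating $1/S_c$. After applying it, the remaining term is $\int_0^{\Xearly}[E\{U^f\mid\Tearly\ge r\}-U^f]\,dM_c/S_c$. Its conditional mean given $(A,Z)$ vanishes because, under CAR, the increment at time $r$ is independent of $U^f$ given $\Tearly\ge r$. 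This step needs care with monotone coarsening, so I would reuse the Supplementary argument for Lemma~\ref{lemma:DR} verbatim, conditioning on $A=a$.
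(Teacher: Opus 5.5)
Your proof is correct, but it nests the two layers in the opposite order from the paper. The paper first invokes the AIPTW double robustness of $U_a^{\text{aiptw}}(t;\theta^a;\pi,F)$ from \citet[Lemma~1.9]{van2003unified}: its mean is $E\{\fullest\}$ if $\pi=\pi^o$ or $F=F^o$. It then treats $U_a^{\text{aiptw}}$ itself as the censoring-free estimating function and applies the unconditional Lemma~\ref{lemma:DR} with covariates $(A,Z)$, using $S_c=S_c^o$ in case 2) and $F=F^o$ in case 1). You instead condition on $Z$, apply Lemma~\ref{lemma:DR} to $U^f$ alone inside the stratum $A=a$, and handle the treatment layer by hand through $\rho(Z)$.

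The paper's route is shorter, since the causal lemma becomes a corollary of two black-box DR results. It tacitly requires, however, that \eqref{eq:c-aipcw} is exactly the AIPCW\textsuperscript{sc} transform of $U_a^{\text{aiptw}}$. That holds only because $\mathbbm{1}(\Tearly<C)/S_c(\Xearly\mid A,Z)+\int_0^{\Xearly}dM_c(r;S_c)/S_c(r\mid A,Z)=1$ for every $S_c$, which collapses the transform of the always-observed piece $\{1-w_a(A,Z)\}E\{U^f\mid A=a,Z\}$. Your grouping of the first and third terms avoids both that verification and the external lemma. Your identity $E\{U^a\mid Z\}=\rho(Z)E(V_a\mid A=a,Z)+\{1-\rho(Z)\}m_a(Z;F)$ also writes the bias as $\rho\{E(V_a\mid A=a,Z)-m_a(Z;F^o)\}+(1-\rho)\{m_a(Z;F)-m_a(Z;F^o)\}$, a natural starting point for a rate-double-robustness analysis. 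The price is that you need Lemma~\ref{lemma:DR} conditionally on $(A=a,Z)$ rather than as stated. The paper's proof does deliver this, because every step there conditions on $Z$.

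One technical correction. In the case $S_c=S_c^o$ you say the augmentation integrand is predictable in the filtration generated by $(A,Z)$ and the histories of $T_2$ and $C$. That is false whenever $\Tearly$ involves $T_1$, e.g.\ $\Tearly=\min(T,t)$ for $\Lambda_1^a$, because $\mathbbm{1}(\Xearly\geqslant r)$ then depends on whether $T_1\geqslant r$. The paper's proof of Lemma~\ref{lemma:DR} flags exactly this and fixes it in two steps. First, on the range of integration one has $dM_c=d\tilde M_c$, where $\tilde M_c(s)=\mathbbm{1}(C\leqslant s)-\int_0^s\mathbbm{1}(C\geqslant r)\lambda_c(r\mid A,Z)\,dr$. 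Second, $\tilde M_c$ remains a martingale in the larger filtration that also carries the history of $\Tearly$, because $C\indep(T_1,T_2)\mid(A,Z)$, and there the integrand is predictable. With that substitution your case analysis goes through unchanged.
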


\section{Estimation of relevant quantities} \label{sec:estimation}

The above estimating functions involve nuisance parameters, which are usually unknown and need to be estimated. 
When parametric or semiparametric methods are used for nuisance parameter estimation, 
with a random sample $\{O_i\}_{i=1}^n$ the estimator $\hat{\theta}^a$ is obtained by solving
$\sum_{i=1}^n U_i^a(t;\theta^a ;\hat{S}_c, \hat{\pi},\hat{F})=0$, where $ \hat{S}_c, \hat{\pi}$ and $\hat{F}$ are obtained using the same sample.
Nonparametric methods are more flexible but typically converge at slower than root-$n$ rate and are not asymptotically linear. To ensure root-$n$ consistency of $\hat \theta^a$ in such cases, we use the $K$-fold cross-fitting procedure \citep{ robins2008higher, chernozhukov2018double} 
which introduces independence between the nuisance estimates and the data used to evaluate the estimating function, thereby avoiding the requirement of asymptotic linearity. 
Specifically, the data are split into $K$ folds of similar size with index sets $\mathcal{I}_1, \ldots, \mathcal{I}_K$. For the $k$-th fold $(k=1, \ldots, K)$, we use data out of the $k$-th fold to estimate the nuisance parameters, denoted by 
$\hat{S}_c^{(-k)}$, $ \hat{\pi}^{(-k)}$ and $ \hat{F}^{(-k)}$. The estimator $\hat{\theta}^a$ is then obtained by solving
\begin{equation}
\sum_{k=1}^K \sum_{i \in \mathcal{I}_k} U_i^a\left(t;\theta^a ;  \hat{S}_c^{(-k)}, \hat{\pi}^{(-k)}, \hat{F}^{(-k)} \right)=0.
\end{equation}

Under our causal-AIPCW\textsuperscript{sc} framework, we develop estimators for several 
key causal estimands that are easy to interpret, namely, the following three cumulative incidence functions (CIFs) that represent the treatment-specific risks of i) the non-terminal event, ii) the terminal event without the non-terminal event, and iii) the terminal event following the non-terminal event: 
$\mathrm{CIF}_1^a(t) = P(T_1^a\leqslant t, T_1^a\leqslant T_2^a)$, 
$\mathrm{CIF}_2^a(t) = P(T_2^a\leqslant t, T_1^a>T_2^a)$,  and
$\mathrm{CIF}_3^a(t|t_1) = P(T_2^a\leqslant t| T_1^a=t_1)$.
Note that $\mathrm{CIF}_1$ and $\mathrm{CIF}_2$ are standard quantities (also known as subdistribution functions) in the usual competing-risks setting, whereas $\mathrm{CIF}_3$ is a conditional CDF of the terminal event given that the non-terminal event occurred at $t_1$. To estimate $\mathrm{CIF}_3$, we make 
the commonly used Markov assumption (see e.g., \citealt{xu2010statistical,meira2019estimation, zhang2024marginal}): $\lambda_{3}^a (t |t_1) = \lambda_{3}^a (t)\mathbbm{1}(t\geqslant t_1)$.  
Under this condition, $\mathrm{CIF}_3^a(t|t_1) = P(T_2^a\leqslant t| T_1^a\leqslant t_1, T_2^a\geqslant t_1)$, which is the estimand investigated in \cite{zhang2024marginal}.
Then treatment effects can be naturally defined as the contrasts between $\mathrm{CIF}_j^0$ and $\mathrm{CIF}_j^1, j=1,2,3$, such as risk differences. 

The above CIFs can be expressed in terms of $\Lambda_j^a, j=1,2,3$ as follows: 
$\mathrm{CIF}_j^a(t) = \int_0^tS^a(u)d\Lambda_j^a(u)$, $ j = 1,2$; and $ \mathrm{CIF}_3^a(t| t_1) = 1-\exp\{ \Lambda_3^a(t_1)-\Lambda_3^a(t)\}$, 
where $S^a(t)\equiv P(T^a > t) = \exp \{-\Lambda_1^a(t)-\Lambda_2^a(t) \}.$ 
These expressions for $\mathrm{CIF}_1$ and $\mathrm{CIF}_2$ are well-known in competing risks (see e.g. \citealt{klein2006survival} Chapter 2) and the derivation of $\mathrm{CIF}_3$ is provided in the Appendix. Now we consider estimation of  $\Lambda_j^a, j=1,2,3$ using the proposed causal-AIPCW\textsuperscript{sc} method.

To estimate $\Lambda^a_1(t)$, as indicated before we have the following: $\theta^a= \Lambda^a_1(t)$, $\fullest = d\Monefa$,
and $\Tearly = \min(T, t).$ Plugging into the causal-AIPCW\textsuperscript{sc} formula \eqref{eq:c-aipcw}, we obtain the following closed-form solution of $\hat{\Lambda}_1^1(t)$:
\begin{equation}\label{eq:Lambda1}
\hat{\Lambda}_1^1(t) = \int_0^t\frac{
\sum_{i=1}^n Q_{1i}^1(u)dN_{1i}(u)+ Q_{2i}^1(u)\hat{S}(u | A=1, Z_i) d\hat\Lambda_1(u | A=1, Z_i) }{
\sum_{i=1}^n Q_{1i}^1(u)Y_{1i}(u)+ Q_{2i}^1(u) \hat{S}(u | A=1, Z_i)},
\end{equation} 
where $N_{1i}(t) = \mathbbm{1}(X_{1i}\leqslant t, \delta_{1i}=1)$, $Y_{1i}(t) = \mathbbm{1}( X_{1i} \geqslant t)$, $\tilde{X}_1= \min(X_1,t)$,
$
Q_{1i}^1(t) = {A_i} / \{\hat{\pi}(Z_i) \hat{S}_c(t|Z_i,A=1)\}$, 
$
Q_{2i}^1(t) = 1 
+ {A_i} \{ m_1(\delta_{2i}, X_{1i}, X_{2i}, 1, Z_i, t) -1 \} / {\hat{\pi}(Z_i)}$, and 
\begin{equation}
m_1(\delta_2, X_1, X_2, A, Z, t) = \frac{(1-\delta_2) \mathbbm{1}(X_2\leqslant \tilde{X}_1) }{ \hat{S}(X_2 | A, Z) \hat{S}_c(X_2| A, Z) } +\int_0^{\tilde{X}_1} \frac{d \hat{S}_c(r | A, Z) }{\hat{S}(r | A, Z) \hat{S}_c^{2}(r| A, Z) }.
\end{equation}

The estimator for $\Lambda_1^0(t)$ can be derived similarly, and by symmetry, estimation of $\Lambda^a_2(t)$ follows analogously. The expressions of these estimators are provided in the Appendix, with derivation details included in the 
Supplementary Material.

To estimate $\Lambda_{3}^a(t)$, we have the following: $\theta^a= \Lambda^a_3(t)$, 
$\fullest = d\Mthreefa$ where $\Mthreefa = \mathbbm{1}(T_1^a\leqslant T_2^a\leqslant t)-\int_0^t\mathbbm{1}(T_1^a< u \leqslant T_2^a) d\Lambda_3^a(u)$, 
 $\Tearly = \min(T_2, t).$ We have 
\begin{equation}\label{eq:Lambda3}
\hat{\Lambda}_3^1(t)=\int_{0}^{t}\frac{
\sum_{i=1}^n Q_{1i}^1(u)dN_{3i}(u)+ Q_{3i}^1(u)  \{-d\hat S_2(u|A=1, Z_i)-\hat S(u|A=1,Z_i)d\hat \Lambda_2(u|A=1, Z_i)\} }{
\sum_{i=1}^n Q_{1i}^1(u)Y_{3i}(u)+ Q_{3i}^1(u)\{\hat S_2(u|A=1,Z_i)-\hat S(u|A=1,Z_i)\}},
\end{equation}
where $N_{3i}(t) = \mathbbm{1}(X_{2i}\leqslant t, \delta_{1i}=1, \delta_{2i}=1)$, $Y_{3i}(t) = \mathbbm{1}(X_{1i}< t\leqslant X_{2i}, \delta_{1i}=1)$, $S_2 (t|A, Z) = P(T_2>t|A, Z)$, 
$Q_{1i}^1$ is defined the same as in \eqref{eq:Lambda1}, $\tilde{X}_2=\min(X_2, t)$,\\
$ 
Q_{3i}^1(t) = 1 + {A_i} \{ m_3(\delta_{2i}, X_{2i}, 1, Z_i, t) -1 \} / {\hat{\pi}(Z_i)} $, and 
\begin{equation}
m_3(\delta_2, X_2, A, Z, t)
=\frac{(1-\delta_2)\mathbbm{1}(X_2 \leqslant \tilde{X}_2)}
{\hat{S}_2(X_2 | A, Z)\, \hat{S}_c(X_2 | A, Z)}
+ \int_0^{\tilde{X}_2}
\frac{d\hat{S}_c(r | A, Z)}
{\hat{S}_2(r | A, Z)\, \hat{S}_c^2(r | A, Z)}.
\end{equation}
The derivation details are provided in the Supplementary Material.
The solution for $\hat{\Lambda}_3^0(t)$ can be derived similarly and is provided in the Appendix. 

Under additional regularity conditions, 
the double robustness of the estimating functions implies that the corresponding estimators are CAN when either of the two sets of nuisance models is correctly specified parametrically or semiparametrically. When flexible nonparametric estimators are used, the method inherits rate double robustness properties established in recent literature (see e.g., \citealt{luo2025doubly}). 
In addition, 
since the three transition rates $\Lambda_1^a$, $\Lambda_2^a$, and $\Lambda_3^a$ are variationally independent, the observed data likelihood can be factored as a product of likelihoods involving the three transition rates, respectively. Therefore the observed data tangent space can be decomposed as a direct sum of three mutually orthogonal spaces $\mathcal{T}_1 \oplus \mathcal{T}_2 \oplus \mathcal{T}_3$, each corresponding to one transition rate 
\citep{tsiatis2006semiparametric}. 
Because of the above, the  estimation for $\Lambda_1^a$ and $\Lambda_2^a$ under the nonparametric model  remains efficient under the Markov model for $\Lambda_3^a$; that is, the proposed estimators $\hat\Lambda_1^a$ and $\hat\Lambda_2^a$
are locally semiparametric efficient when both sets of nuisance models are correctly specified.

For comparison, we provide in the Supplementary Material expressions of  the corresponding IPW estimators that are not doubly robust, as well as the naive Nelson-Aalen estimators which ignore confounding and covariate-dependent censoring.

\section{Simulations}\label{sec:simulation}

We conducted simulations to evaluate the finite sample performance of our proposed method. The semi-competing risks data were simulated using an approach adapted from \cite{zhang2024marginal} with details in the Supplementary Material. We considered six estimands:  $\operatorname{CIF}_1^a(3)$, $\operatorname{CIF}_2^a(3)$ and  $\operatorname{CIF}_3^a(3 | 0.5)$ for $a=0,1$. The true values of these estimands were computed from a simulated full data sample of size $10^7$. We compared our proposed causal-AIPCW\textsuperscript{sc} estimators with the IPW estimators, naive Nelson-Aalen estimators which do not account for covariate-dependent censoring and confounding,  and full data estimators. Note that the full data estimators, which use empirical probability estimates computed from the full data, are unattainable in practice and were included as a  benchmark. 

We considered various working models for the nuisance parameters, including intentionally misspecified models to evaluate method  robustness. 
For the censoring model, $S_c(t|A,Z)$ was estimated using Cox regression from  R package \texttt{survival} and random survival forest (RSF) \citep{Ishwaran2008random} from  \texttt{randomForestSRC}. For the PS model, $\pi(Z)$ was estimated using logistic regression and the gradient boosted model (GBM) from  \texttt{twang} \citep{ridgeway2022toolkit}. For the event time model, the nuisance parameters  are $\Lambda_1(t|A,Z), \Lambda_2(t|A,Z)$ and $S_2(t|{A,Z})$. The first two were estimated using cause-specific Cox regression from \texttt{survival} and RSF for competing risks \citep{Ishwaran2014random} from \texttt{randomForestSRC}, while $S_2(t|{A,Z})$ 
was estimated using Cox and RSF for univariate survival data in the same manner as for $S_c(t|A,Z)$. Five-fold cross-fitting was implemented when nonparametric methods were used for nuisance estimation. Standard errors (SEs) of the estimators were obtained from nonparametric bootstrap with 200 samples and 95\% confidence intervals (CIs) were constructed using the normal approximation. The number of simulation replicates was 1000, corresponding to a margin of error of  ±1.35\% for the coverage probability.

Table \ref{table:sim} presents simulation results from samples of size 500 for $\operatorname{CIF}_j^0, j =1,2,3$, including the bias, empirical standard deviation (SD), the mean of the bootstrapped SE, and the coverage probability (CP) of 95\% CIs. The results for $\operatorname{CIF}_j^1, j =1,2,3$ are similar and are provided in the Supplementary Material. 
Consistent with the \textit{model DR} property, the causal-AIPCW\textsuperscript{sc} estimators  performed well when one set of nuisance models was correctly estimated at the root-$n$ rate, yielding small bias and coverage close to the nominal 95\%, even when the other set was misspecified. The causal-AIPCW\textsuperscript{sc} estimators also showed good performance when both sets of nuisance models were estimated using machine learning techniques, thereby demonstrating the \textit{rate DR} property. 
As expected, the naive method and the IPW method with misspecified nuisance models performed poorly, exhibiting large bias and poor coverage.

\section{Application to HAAS Study}\label{sec:application}

We applied our method to data from the Honolulu-Asia Aging Study (HAAS), a prospective community-based cohort study of 3,734 Japanese-American men born 1900-1919 and residing on Oahu, Hawaii \citep{p2012honolulu}. HAAS was initiated in 1991 as a continuation of the Honolulu Heart Program (HHP, 1965-1973) and concluded in 2012. Our goal was to assess the effect of midlife heavy drinking on cognitive impairment as well as death, which are semi-competing risks. Alcohol consumption was assessed based on self reported daily intake of beer, wine, liquor, and sake, and quantified as ethanol intake. It was measured during the HHP study, when the participants were in midlife, and was dichotomized into two groups: heavy  drinking (consuming $>1.2$ ounces of ethanol per day at any point during midlife) versus not. Cognitive impairment was assessed using scores from the Cognitive Assessment and Screening Instrument (CASI) and a score $<74$ was considered a moderate impairment (MI). Following \cite{zhang2024marginal}, we considered the following five baseline covariates measured at the start of HAAS: age, years of education, ApoE genotype
(positive/negative), systolic blood pressure, and heart rate. We restricted the analysis to participants with normal cognitive function at baseline (i.e., CASI $\geqslant$ 74), and after excluding those with missing exposure or covariates, the final sample included 1,881 participants with follow-up since the start of HAAS. Among them, 431 participants (22.9\%) were heavy drinkers
and 1,450 (77.1\%) were non-heavy drinkers. The number of MI and death events in the overall sample and stratified by the exposure group are listed in Table \ref{table:haas_event}. 
 
We applied the proposed causal-AIPCW\textsuperscript{sc} estimators to analyze the causal effect of midlife heavy drinking on the risks of MI, death without MI, and death following MI using HAAS data, with the time origin being the study entry of HAAS. 
Specifically, we estimated $\operatorname{CIF}_1^a(t)$,  $\operatorname{CIF}_2^a(t)$,  $\operatorname{CIF}_3^a(t | 8)$ for $a=0,1$ and $t=5,10,15,20$ years. These correspond to the risk of MI before $t$, the risk of death without MI before $t$, and the risk of death before $t$, given that MI occurred at 8 years, respectively, for heavy-drinkers and non-heavy drinkers. For the nuisance estimation, we used GBM for the PS model, RSF for both the censoring model and event time model. Bootstrap was used to construct 95\% CIs as described in the simulations. Risk differences (RDs) and their CIs were also computed. 

As shown in Figure \ref{fig:haas} and Supplementary Table S2
, under the previously described assumptions including no unmeasured confounding, heavy drinking caused a higher risk of MI 
at 10 and 15 years (RD = 0.07, 95\% CI: 0.02-0.12 at 10 years; RD = 0.07, 95\% CI: 0.01-0.12 at 15 years), but not at 5 or 20 years. Heavy drinking did not have a significant effect on the risk of death without MI or death following MI by 8 years at the evaluated time points.

Our findings align with 
 \cite{zhang2024marginal}  
in showing  significant effects of midlife heavy drinking on MI risk at 10 and 15 years after study entry, but differ at 5 years. This discrepancy likely stems from the restrictive assumptions they impose with the Cox marginal structural model as well as independent censoring. 

\section{Discussion}\label{sec:discussion}

We have developed a general framework for doubly robust causal estimation under covariate-dependent censoring in semi-competing risks settings, applicable to a broad range of causal estimands.  Using this framework, we have proposed estimators for treatment-specific risks of i) the non-terminal event, ii) the terminal event without non-terminal event, and iii) the terminal event following the non-terminal event. These quantities naturally lead to the estimated treatment effects, defined as contrasts between the treatment-specific risks. The proposed estimators enjoy double robustness, and we have demonstrated their  performance in simulations. Our application to data from the HAAS study showcases the method's practical utility. The method has been implemented in the R package 
\texttt{causalDR} available on GitHub. 

We have adopted the commonly used Markov assumption $\lambda_{3}^a (t |t_1) = \lambda_{3}^a (t)\mathbbm{1}(t\geqslant t_1)$ to define $\Lambda_{3}^a$ and $\operatorname{CIF}_3^a$. Importantly, this assumption applies only to these  quantities and is not required for our general framework or the other estimands we considered. Recent developments in multi-state modeling literature have introduced methods for estimating state transition probabilities under non-Markov settings \citep{putter2018non, andersen2022inference}. 
Future work could explore estimating these quantities using our framework under non-Markov settings. Nonetheless, as reported in \cite{zhang2024marginal}, the Markov assumption was not rejected in the HAAS data, supporting our estimators' practical relevance.

It is also worth noting that the Markov condition we impose is at the marginal level (i.e., without conditioning on covariates). In general, a conditional Markov assumption does not imply the marginal Markov condition (proof in Supplementary Material). Therefore we do not impose a conditional Markov assumption of the form $\lambda_{3}^a (t | t_1, A, Z) = \lambda_{3}^a (t| A,Z)\mathbbm{1}(t\geqslant t_1)$, even though doing so would simplify the nuisance estimation.

Semi-competing risks data have also been analyzed using alternative approaches, such as the win ratio and event-specific win ratio methods \citep{yang2022event}. In the causal inference literature, recent work has examined these data under mediation frameworks in which the non-terminal event is treated as a mediator of the treatment effect on the terminal event \citep{huang2021causal,xu2022bayesian}. While such approaches offer useful perspectives, they target different causal quantities. 
In contrast, our  method targets the total causal effects of treatment across all possible event pathways, and provides 
estimates of  treatment-specific risks as clinically meaningful  overall assessments of treatment effects.

In our analysis of the HAAS data, we have considered time since HAAS enrollment as the time scale, while alternative time scales such as age at event may also be of interest. On this time scale, left truncation occurs because participants entered the study at various ages, and individuals who had experienced events (e.g., MI or death) prior to their study entry were not included in the data, leading to selection bias.
We are currently working on a separate project to extend the proposed 
framework to semi-competing risks settings with covariate-dependent left truncation and right censoring, leveraging recent doubly robust methods for handling covariate-dependent left truncation \citep{wang2024doubly,wang2025liberatingframeworktruncationcensoring}.

\section*{Acknowledgments}
This research used services provided by the Open Science Grid (OSG) Consortium, 
supported by the National Science Foundation awards \#2030508 and \#1836650.
\appendix
\section{Appendix}

\subsection{Estimators of $\Lambda_j^a(t)$, $j=1,2,3,a=0,1$}
In the main paper, we presented the expressions for $\hat{\Lambda}_1^1(t)$ and $\hat{\Lambda}_3^1(t)$, which involve functions of $Q_{1i}^1, Q_{2i}^1, Q_{3i}^1, m_1, m_3, N_{1i}, N_{3i}, Y_{1i}, Y_{3i}$. Here we provide the expressions for $\hat{\Lambda}_1^0(t)$, $\hat{\Lambda}_2^1(t)$, $\hat{\Lambda}_2^0(t)$ and $\hat{\Lambda}_3^0(t)$, which were derived in a similar manner. 
Let 
\begin{align*}
Q_{1i}^0(t) &= \frac{1-A_i}{\{1-\hat \pi(Z_i)\} \hat S_c(t | Z_i,A=0)},\\
Q_{2i}^0(t) &= 1-\frac{1-A_i}{1-\hat \pi(Z_i)}+\frac{1-A_i}{1-\hat \pi(Z_i)} m_1(\delta_{2i},X_{1i},  X_{2i}, 0, Z_i, t),\\
Q_{3i}^0(t) &= 1-\frac{1-A_i}{1-\hat \pi(Z_i)}+\frac{1-A_i}{1-\hat \pi(Z_i)} m_3(\delta_{2i}, X_{2i}, 0, Z_i, t),\\
N_{2i}(t) &= \mathbbm{1}(X_{2i}\leqslant t, \delta_{1i}=0, \delta_{2i}=1).
\end{align*}

We have 
\begin{align*}
\hat{\Lambda}_1^0(t) &= \int_0^t\frac{
\sum_{i=1}^n Q_{1i}^0(u)dN_{1i}(u)+ Q_{2i}^0(u) \hat S(u | A=0, Z_i)d \hat \Lambda_1(u | A=0, Z_i) }{
\sum_{i=1}^n Q_{1i}^0(u)Y_{1i}(u)+ Q_{2i}^0(u)\hat S(u | A=0, Z_i)},\\
\hat{\Lambda}_2^a(t) &=\int_0^t \frac{
\sum_{i=1}^n Q_{1i}^a(u)dN_{2i}(u)+ Q_{2i}^a(u) \hat S(u | A=a, Z_i)d \hat \Lambda_2(u | A=a, Z_i) }{
\sum_{i=1}^n Q_{1i}^a(u)Y_{1i}(u)+ Q_{2i}^a(u)\hat S(u | A=a, Z_i)},\quad a = 0,1,\\
\hat{\Lambda}_3^0(t)&=\int_0^t\frac{
\sum_{i=1}^n Q_{1i}^0(u)dN_{3i}(u)+ Q_{3i}^0(u)  \{-d\hat S_2(u|A=0, Z_i)-\hat S(u|A=0,Z_i)d\hat \Lambda_2(u|A=0, Z_i)\} }{
\sum_{i=1}^n Q_{1i}^0(u)Y_{3i}(u)+ Q_{3i}^0(u)\{\hat S_2(u|A=0,Z_i)-\hat S(u|A=0,Z_i)\}}.
\end{align*}

\subsection{Expression of $\mathrm{CIF}_3^a(t|t_1)$}\label{append:CIF3}
\begin{align*}
\text{For } t>t_1: \mathrm{CIF}_3^a(t|t_1) &= 1-\exp\{-\Lambda_3^a(t|t_1)\} \\
    &= 1-\exp\left\{-\int_{0}^t\lambda_3^a(u|t_1)du\right\} \\
    &= 1-\exp\left\{-\int_{t_1}^t\lambda_3^a(u)du\right\} \quad \text{[Markov condition]}\\
    &= 1-\exp\left\{\Lambda_3^a(t_1)-\Lambda_3^a(t)\right\}.
\end{align*}


\bibliographystyle{apalike} 

\begin{singlespace}
\bibliography{references}

@article{andersen1991non,
  title={Non-and semi-parametric estimation of transition probabilities from censored observation of a non-homogeneous Markov process},
  author={Andersen, Per Kragh and Hansen, Lars Sommer and Keiding, Niels},
  journal={Scandinavian Journal of Statistics},
  pages={153--167},
  year={1991},
  publisher={JSTOR}
}

@article{andersen2022inference,
  title={Inference for transition probabilities in non-Markov multi-state models},
  author={Andersen, Per Kragh and Wandall, Eva Nina Sparre and Pohar Perme, Maja},
  journal={Lifetime Data Analysis},
  volume={28},
  number={4},
  pages={585--604},
  year={2022},
  publisher={Springer}
}

@article{bai2013doubly,
  title={Doubly-robust estimators of treatment-specific survival distributions in observational studies with stratified sampling},
  author={Bai, Xiaofei and Tsiatis, Anastasios A and O'Brien, Sean M},
  journal={Biometrics},
  volume={69},
  number={4},
  pages={830--839},
  year={2013},
  publisher={Oxford University Press}
}

@article{bai2017optimal,
  title={Optimal treatment regimes for survival endpoints using a locally-efficient doubly-robust estimator from a classification perspective},
  author={Bai, Xiaofei and Tsiatis, Anastasios A and Lu, Wenbin and Song, Rui},
  journal={Lifetime Data Analysis},
  volume={23},
  number={4},
  pages={585--604},
  year={2017},
  publisher={Springer}
}

@article{chernozhukov2018double,
    author = {Chernozhukov, Victor and Chetverikov, Denis and Demirer, Mert and Duflo, Esther and Hansen, Christian and Newey, Whitney and Robins, James},
    title = {Double/debiased machine learning for treatment and structural parameters},
    journal = {The Econometrics Journal},
    volume = {21},
    number = {1},
    pages = {C1-C68},
    year = {2018},
    month = {02},
    issn = {1368-4221},
    doi = {10.1111/ectj.12097},
    url = {https://doi.org/10.1111/ectj.12097},
    eprint = {https://academic.oup.com/ectj/article-pdf/21/1/C1/27684918/ectj00c1.pdf},
}

@article{dukes2019doubly,
  title={On doubly robust estimation of the hazard difference},
  author={Dukes, Oliver and Martinussen, Torben and Tchetgen Tchetgen, Eric J and Vansteelandt, Stijn},
  journal={Biometrics},
  volume={75},
  number={1},
  pages={100--109},
  year={2019},
  publisher={Wiley Online Library}
}

@article{fine2001semi,
  title={On semi-competing risks data},
  author={Fine, Jason P and Jiang, Hongyu and Chappell, Rick},
  journal={Biometrika},
  volume={88},
  number={4},
  pages={907--919},
  year={2001},
  publisher={Biometrika Trust}
}

@inproceedings{gill1997coarsening,
  title={Coarsening at random: Characterizations, conjectures, counter-examples},
  author={Gill, Richard D and Van Der Laan, Mark J and Robins, James M},
  booktitle={Proceedings of the First Seattle Symposium in Biostatistics: Survival Analysis},
  pages={255--294},
  year={1997},
  organization={Springer}
}

@article{heitjan1991ignorability,
  title={Ignorability and coarse data},
  author={Heitjan, Daniel F and Rubin, Donald B},
  journal={The Annals of Statistics},
  pages={2244--2253},
  year={1991},
  publisher={JSTOR}
}

@book{hernan2020causal,
  title     = {Causal Inference: What If},
  author    = {Hernán, Miguel A. and Robins, James M.},
  year      = {2020},
  publisher = {Chapman and Hall/CRC}
}

@article{hou2023treatment,
  title={Treatment effect estimation under additive hazards models with high-dimensional confounding},
  author={Hou, Jue and Bradic, Jelena and Xu, Ronghui},
  journal={Journal of the American Statistical Association},
  volume={118},
  number={541},
  pages={327--342},
  year={2023},
  publisher={Taylor \& Francis}
}

@article{huang2021causal,
  title={Causal mediation of semicompeting risks},
  author={Huang, Yen-Tsung},
  journal={Biometrics},
  volume={77},
  number={4},
  pages={1143--1154},
  year={2021},
  publisher={Wiley Online Library}
}

@article{Ishwaran2008random,
author = {Hemant Ishwaran and Udaya B. Kogalur and Eugene H. Blackstone and Michael S. Lauer},
title = {{Random survival forests}},
volume = {2},
journal = {The Annals of Applied Statistics},
number = {3},
publisher = {Institute of Mathematical Statistics},
pages = {841--860},
year = {2008},
doi = {10.1214/08-AOAS169},
URL = {https://doi.org/10.1214/08-AOAS169}
}

@article{Ishwaran2014random,
  title={Random survival forests for competing risks},
  author={Ishwaran, Hemant and Gerds, Thomas A and Kogalur, Udaya B and Moore, Richard D and Gange, Stephen J and Lau, Bryan M},
  journal={Biostatistics},
  volume={15},
  number={4},
  pages={757--773},
  year={2014},
  publisher={Oxford University Press}
}

@book{klein2006survival,
  title={Survival analysis: techniques for censored and truncated data},
  author={Klein, John P and Moeschberger, Melvin L},
  year={2006},
  publisher={Springer Science \& Business Media}
}

@article{lin2022doubly,
  title={Doubly-robust estimator of the difference in restricted mean times lost with competing risks data},
  author={Lin, Jingyi and Trinquart, Ludovic},
  journal={Statistical Methods in Medical Research},
  volume={31},
  number={10},
  pages={1881--1903},
  year={2022},
  publisher={SAGE Publications Sage UK: London, England}
}

@article{lok2018estimation,
  title={Estimation of the cumulative incidence function under multiple dependent and independent censoring mechanisms},
  author={Lok, Judith J and Yang, Shu and Sharkey, Brian and Hughes, Michael D},
  journal={Lifetime Data Analysis},
  volume={24},
  number={2},
  pages={201--223},
  year={2018},
  publisher={Springer}
}

@article{luo2025doubly,
  title={Doubly robust estimation under a possibly misspecified marginal structural Cox model},
  author={Luo, Jiyu and Rava, Denise and Bradic, Jelena and Xu, Ronghui},
  journal={Biometrika},
  volume={112},
  number={1},
  pages={asae065},
  year={2025},
  publisher={Oxford University Press}
}

@article{meira2019estimation,
  title={Estimation in the progressive illness-death model: A nonexhaustive review},
  author={Meira-Machado, Lu{\'\i}s and Sestelo, Marta},
  journal={Biometrical Journal},
  volume={61},
  number={2},
  pages={245--263},
  year={2019},
  publisher={Wiley Online Library}
}

@article{p2012honolulu,
  title={The Honolulu-Asia Aging Study: epidemiologic and neuropathologic research on cognitive impairment},
  author={Gelber, Rebecca P. and Launer, Lenore J. and White, Lon R.},
  journal={Current Alzheimer Research},
  volume={9},
  number={6},
  pages={664--672},
  year={2012},
  publisher={Bentham Science Publishers}
}

@article{putter2018non,
  title={Non-parametric estimation of transition probabilities in non-Markov multi-state models: the landmark Aalen--Johansen estimator},
  author={Putter, Hein and Spitoni, Cristian},
  journal={Statistical Methods in Medical Research},
  volume={27},
  number={7},
  pages={2081--2092},
  year={2018},
  publisher={SAGE Publications Sage UK: London, England}
}

@article{rava2023doubly,
  title={Doubly robust estimation of the hazard difference for competing risks data},
  author={Rava, Denise and Xu, Ronghui},
  journal={Statistics in Medicine},
  volume={42},
  number={6},
  pages={799--814},
  year={2023},
  publisher={Wiley Online Library}
}

@book{ridgeway2022toolkit,
  title={Toolkit for weighting and analysis of nonequivalent groups: a tutorial for the R TWANG package},
  author={Ridgeway, Greg and McCaffrey, Daniel F and Morral, Andrew R and Cefalu, Matthew and Burgette, Lane F and Pane, Joseph D and Griffin, Beth Ann},
  year={2022},
  publisher={Rand Santa Monica, Calif}
}

@inproceedings{robins1993information,
  author       = {Robins, James M.},
  title        = {Information recovery and bias adjustment in proportional hazards regression analysis of randomized trials using surrogate markers},
  booktitle    = {Proceedings of the Biopharmaceutical Section, American Statistical Association},
  pages        = {24--33},
  year         = {1993},
  address      = {Washington, D.C.},
  publisher    = {American Statistical Association}
}

@incollection{robins2008higher,
  title={Higher order influence functions and minimax estimation of nonlinear functionals},
  author={Robins, James and Li, Lingling and Tchetgen, Eric and van der Vaart, Aad and others},
  booktitle={Probability and statistics: essays in honor of David A. Freedman},
  volume={2},
  pages={335--422},
  year={2008},
  publisher={Institute of Mathematical Statistics}
}

@article{Rotnitzky2005inverse,
  title={Inverse Probability Weighting in Survival Analysis},
  author={Rotnitzky, Andrea and Robins, James M.},
  journal={Encyclopedia of Biostatistics},
  pages={},
  year={2005},
  publisher={John Wiley \& Sons, Ltd}
}

@article{rotnitzky2021characterization,
  title={Characterization of parameters with a mixed bias property},
  author={Rotnitzky, Andrea and Smucler, Ezequiel and Robins, James M},
  journal={Biometrika},
  volume={108},
  number={1},
  pages={231--238},
  year={2021},
  publisher={Oxford University Press}
}

@article{rubin1974estimating,
  title={Estimating causal effects of treatments in randomized and nonrandomized studies.},
  author={Rubin, Donald B},
  journal={Journal of Educational Psychology},
  volume={66},
  number={5},
  pages={688--701},
  year={1974},
  publisher={American Psychological Association}
}

@article{sjolander2017doubly,
  title={Doubly robust estimation of attributable fractions in survival analysis},
  author={Sj{\"o}lander, Arvid and Vansteelandt, Stijn},
  journal={Statistical Methods in Medical Research},
  volume={26},
  number={2},
  pages={948--969},
  year={2017},
  publisher={SAGE Publications Sage UK: London, England}
}

@book{tsiatis2006semiparametric,
  title={Semiparametric theory and missing data},
  author={Tsiatis, Anastasios A},
  year={2006},
  publisher={Springer}
}

@book{van2003unified,
  title={Unified methods for censored longitudinal data and causality},
  author={Van der Laan, Mark J and Robins, James M},
  volume={5},
  year={2003},
  publisher={Springer}
}

@article{van2025doubly,
  title={Doubly robust estimation of marginal cumulative incidence curves for competing risk analysis},
  author={van Hage, Patrick and le Cessie, Saskia and van Maaren, Marissa C and Putter, Hein and van Geloven, Nan},
  journal={Statistics in Medicine},
  volume={44},
  number={18-19},
  pages={e70066},
  year={2025},
  publisher={Wiley Online Library}
}

@article{wang2024doubly,
  title={Doubly robust estimation under covariate-induced dependent left truncation},
  author={Wang, Yuyao and Ying, Andrew and Xu, Ronghui},
  journal={Biometrika},
  volume={111},
  number={3},
  pages={789--808},
  year={2024},
  publisher={Oxford University Press}
}

@article{wang2025liberatingframeworktruncationcensoring,
  title={A Liberating Framework from Truncation and Censoring, with Application to Learning Treatment Effects},
  author={Wang, Yuyao and Ying, Andrew and Xu, Ronghui},
  journal={arXiv:2411.18879},
  year={2025}
}

@article{xu2010statistical,
  title={Statistical analysis of illness--death processes and semicompeting risks data},
  author={Xu, Jinfeng and Kalbfleisch, John D and Tai, Beechoo},
  journal={Biometrics},
  volume={66},
  number={3},
  pages={716--725},
  year={2010},
  publisher={Wiley Online Library}
}

@article{xu2022bayesian,
  title={A Bayesian nonparametric approach for evaluating the causal effect of treatment in randomized trials with semi-competing risks},
  author={Xu, Yanxun and Scharfstein, Daniel and M{\"u}ller, Peter and Daniels, Michael},
  journal={Biostatistics},
  volume={23},
  number={1},
  pages={34--49},
  year={2022},
  publisher={Oxford University Press}
}

@article{yang2022event,
  title={Event-specific win ratios for inference with terminal and non-terminal events},
  author={Yang, Song and Troendle, James and Pak, Daewoo and Leifer, Eric},
  journal={Statistics in Medicine},
  volume={41},
  number={7},
  pages={1225--1241},
  year={2022},
  publisher={Wiley Online Library}
}

@article{zhang2012double,
  title={Double-robust semiparametric estimator for differences in restricted mean lifetimes in observational studies},
  author={Zhang, Min and Schaubel, Douglas E},
  journal={Biometrics},
  volume={68},
  number={4},
  pages={999--1009},
  year={2012},
  publisher={Oxford University Press}
}

@article{zhang2024marginal,
  title={Marginal structural illness-death models for semi-competing risks data},
  author={Zhang, Yiran and Ying, Andrew and Edland, Steve and White, Lon and Xu, Ronghui},
  journal={Statistics in Biosciences},
  volume={16},
  number={3},
  pages={668--692},
  year={2024},
  publisher={Springer}
}
\end{singlespace}
\clearpage

\begin{table}
\caption{Simulation results. Models in red are misspecified: for the event time model, it does not follow a Cox specification based on the data generation mechanism; for the censoring model, Cox1 is the correct model $C\sim A+Z_1+Z_2$ and Cox2 is the misspecified model $C\sim A+Z_1^2+Z_2^2$; for the PS model, logit1 is the correct model $A\sim Z_1+Z_2$ and logit2 is the misspecified model $A\sim Z_1^2+Z_2^2$.\label{table:sim}}
\centering
\begin{tabular}[t]{lllrlll}
\toprule
Estimand:truth & Method & Event Time/Censor-PS & Bias & SD & SE & CP(\%)\\
\midrule
$\text{CIF}_1^0(3)$: 0.358 & Causal- & RSF/(Cox1-gbm) & -0.001 & 0.033 & 0.033 & 94.9\\
 & AIPCW\textsuperscript{sc} & RSF/(Cox1-logit1) & 0.001 & 0.033 & 0.031 & 93.7\\
 &  & RSF/(RSF-gbm) & 0.000 & 0.032 & 0.033 & 95.2\\
 &  & \textcolor{red}{Cox}/(Cox1-logit1) & -0.001 & 0.034 & 0.033 & 94.7\\
 &  & \textcolor{red}{Cox}/(RSF-gbm) & -0.005 & 0.033 & 0.033 & 95.5\\
 &  & \textcolor{red}{Cox}/(RSF-logit1) & 0.001 & 0.034 & 0.032 & 94.3\\
 &  & \textcolor{red}{Cox}/(\textcolor{red}{Cox2}-\textcolor{red}{logit2}) & -0.033 & 0.032 & 0.031 & 80.2\\
 \cline{2-7}
 & IPW & -/(Cox1-logit1) & -0.003 & 0.034 & 0.032 & 94.7\\
 &  & -/(RSF-gbm) & -0.009 & 0.032 & 0.033 & 94.0\\
 &  & -/(RSF-logit1) & -0.002 & 0.033 & 0.032 & 94.2\\
 &  & -/(\textcolor{red}{Cox2}-\textcolor{red}{logit2}) & -0.050 & 0.031 & 0.030 & 61.5\\
  \cline{2-7}
 & Naive & - & -0.050 & 0.031 & 0.030 & 61.5\\
 & Full & - & 0.000 & 0.022 & 0.021 & 93.6\\
\addlinespace
$\text{CIF}_2^0(3)$: 0.322 & Causal- & RSF/(Cox1-gbm) & 0.005 & 0.029 & 0.029 & 94.6\\
 &  AIPCW\textsuperscript{sc} & RSF/(Cox1-logit1) & 0.002 & 0.029 & 0.028 & 93.8\\
 &  & RSF/(RSF-gbm) & 0.005 & 0.029 & 0.030 & 95.0\\
 &  & \textcolor{red}{Cox}/(Cox1-logit1) & 0.000 & 0.030 & 0.029 & 94.8\\
 &  & \textcolor{red}{Cox}/(RSF-gbm) & 0.007 & 0.030 & 0.030 & 94.0\\
 &  & \textcolor{red}{Cox}/(RSF-logit1) & 0.001 & 0.030 & 0.029 & 94.2\\
 &  & \textcolor{red}{Cox}/(\textcolor{red}{Cox2}-\textcolor{red}{logit2}) & 0.033 & 0.030 & 0.030 & 81.0\\
 \cline{2-7}
 & IPW & -/(Cox1-logit1) & -0.001 & 0.031 & 0.030 & 94.5\\
 &  & -/(RSF-gbm) & 0.006 & 0.030 & 0.030 & 94.7\\
 &  & -/(RSF-logit1) & -0.001 & 0.030 & 0.029 & 94.2\\
 &  & -/(\textcolor{red}{Cox2}-\textcolor{red}{logit2}) & 0.033 & 0.031 & 0.031 & 81.6\\
  \cline{2-7}
 & Naive & - & 0.031 & 0.031 & 0.031 & 84.1\\
 & Full & - & 0.001 & 0.021 & 0.021 & 95.0\\
\addlinespace
$\text{CIF}_3^0(3|0.5)$: 0.395 & Causal- & RSF/(Cox1-gbm) & 0.021 & 0.064 & 0.068 & 97.2\\
 & AIPCW\textsuperscript{sc} & RSF/(Cox1-logit1) & 0.020 & 0.063 & 0.062 & 95.0\\
 &  & RSF/(RSF-gbm) & 0.018 & 0.064 & 0.069 & 97.2\\
&  & \textcolor{red}{Cox}/(Cox1-logit1) & 0.005 & 0.062 & 0.062 & 95.6\\
 &  & \textcolor{red}{Cox}/(RSF-gbm) & 0.018 & 0.064 & 0.066 & 95.8\\
 &  & \textcolor{red}{Cox}/(RSF-logit1) & 0.006 & 0.063 & 0.062 & 95.1\\
 &  & \textcolor{red}{Cox}/(\textcolor{red}{Cox2}-\textcolor{red}{logit2}) & 0.059 & 0.066 & 0.065 & 86.8\\
 \cline{2-7}
 & IPW & -/(Cox1-logit1) & -0.003 & 0.069 & 0.070 & 96.0\\
 &  & -/(RSF-gbm) & 0.019 & 0.069 & 0.072 & 95.3\\
 &  & -/(RSF-logit1) & -0.001 & 0.068 & 0.067 & 95.6\\
 &  & -/(\textcolor{red}{Cox2}-\textcolor{red}{logit2}) & 0.074 & 0.075 & 0.076 & 86.5\\
  \cline{2-7}
 & Naive & - & 0.077 & 0.076 & 0.079 & 88.2\\
 & Full & - & -0.005 & 0.073 & 0.075 & 96.0\\
\bottomrule
\end{tabular}
\end{table}

\begin{table}
\caption{Event counts by exposure status in the HAAS data\label{table:haas_event}. MI = moderate impairment. 
}
\centering
\begin{tabular}[t]{lccc}
\toprule
 & \makecell{Overall  \\N = 1,881} & \makecell{Non-heavy drinking  \\N = 1,450} & \makecell{Heavy drinking  \\N = 431}\\
\midrule
Event Type &  &  & \\
~Censored before MI or death & 357 (19\%) & 280 (19\%) & 77 (18\%)\\
~MI then censor & 261 (14\%) & 210 (14\%) & 51 (12\%)\\
~Death without MI & 637 (34\%) & 501 (35\%) & 136 (32\%)\\
~MI then death & 626 (33\%) & 459 (32\%) & 167 (39\%)\\
\bottomrule
\end{tabular}
\end{table}




\begin{figure}[htbp]
    \centering
    \begin{subfigure}[b]{0.47\textwidth}
        \centering
        \includegraphics[width=0.9\textwidth]{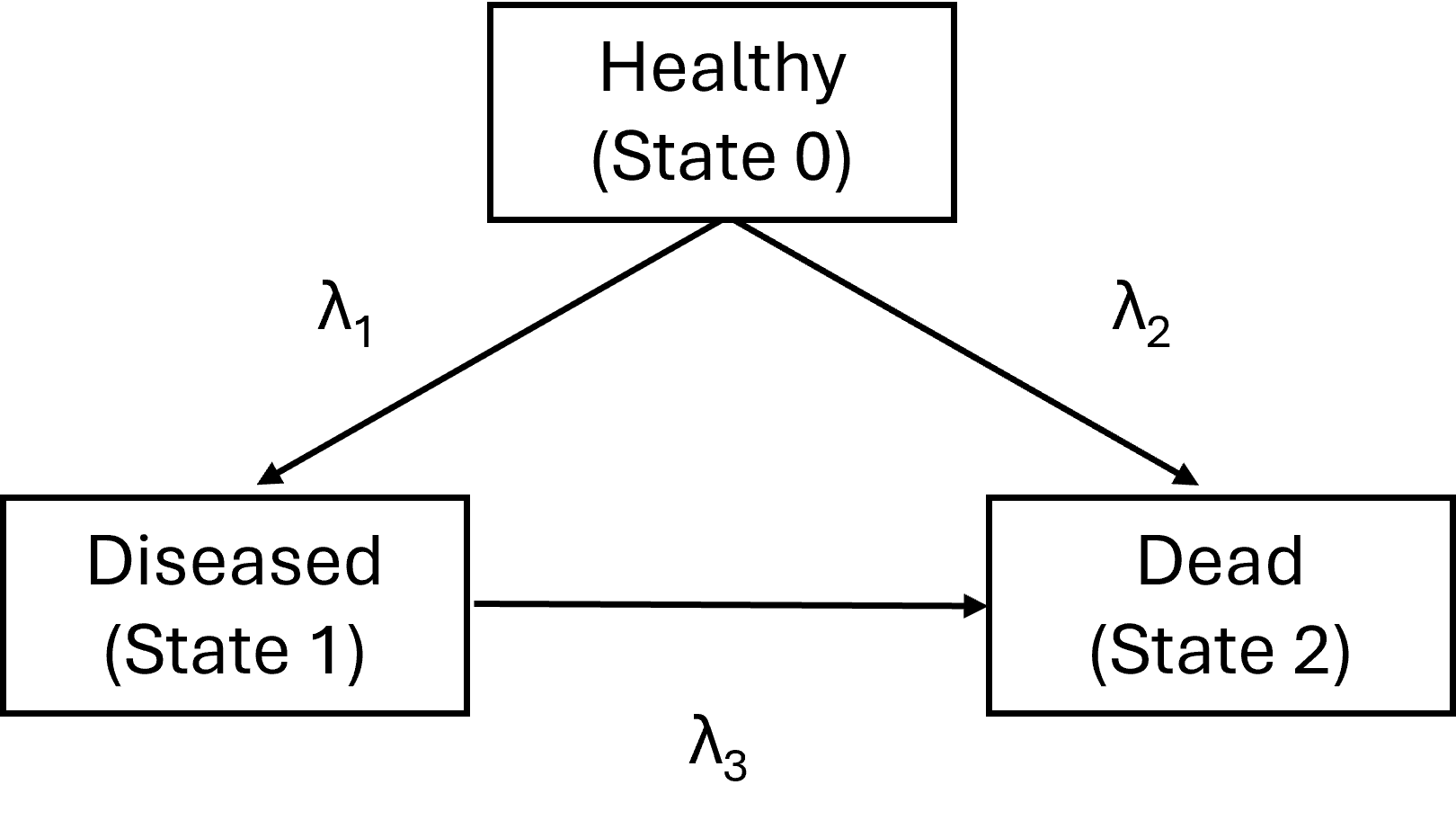}
        \caption{Illness-death model representation where $\lambda_1$, $\lambda_2$, and $\lambda_3$ denote the three transition rates}
        \label{fig:illness_death}
    \end{subfigure}
    \hfill
    \begin{subfigure}[b]{0.45\textwidth}
        \centering
        \includegraphics[width=0.8\textwidth]{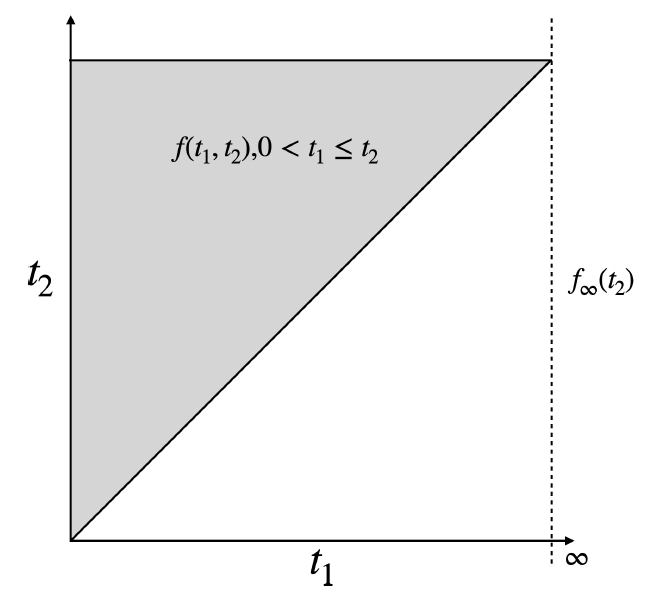}
        \caption{Joint density of non-terminal event time ($T_1$) and terminal event time ($T_2$)}
        \label{fig:joint_density}
    \end{subfigure}
    \caption{Semi-competing risks data}
\end{figure}

\begin{figure}[htbp]
    \centering
    \includegraphics[width=0.8\textwidth]{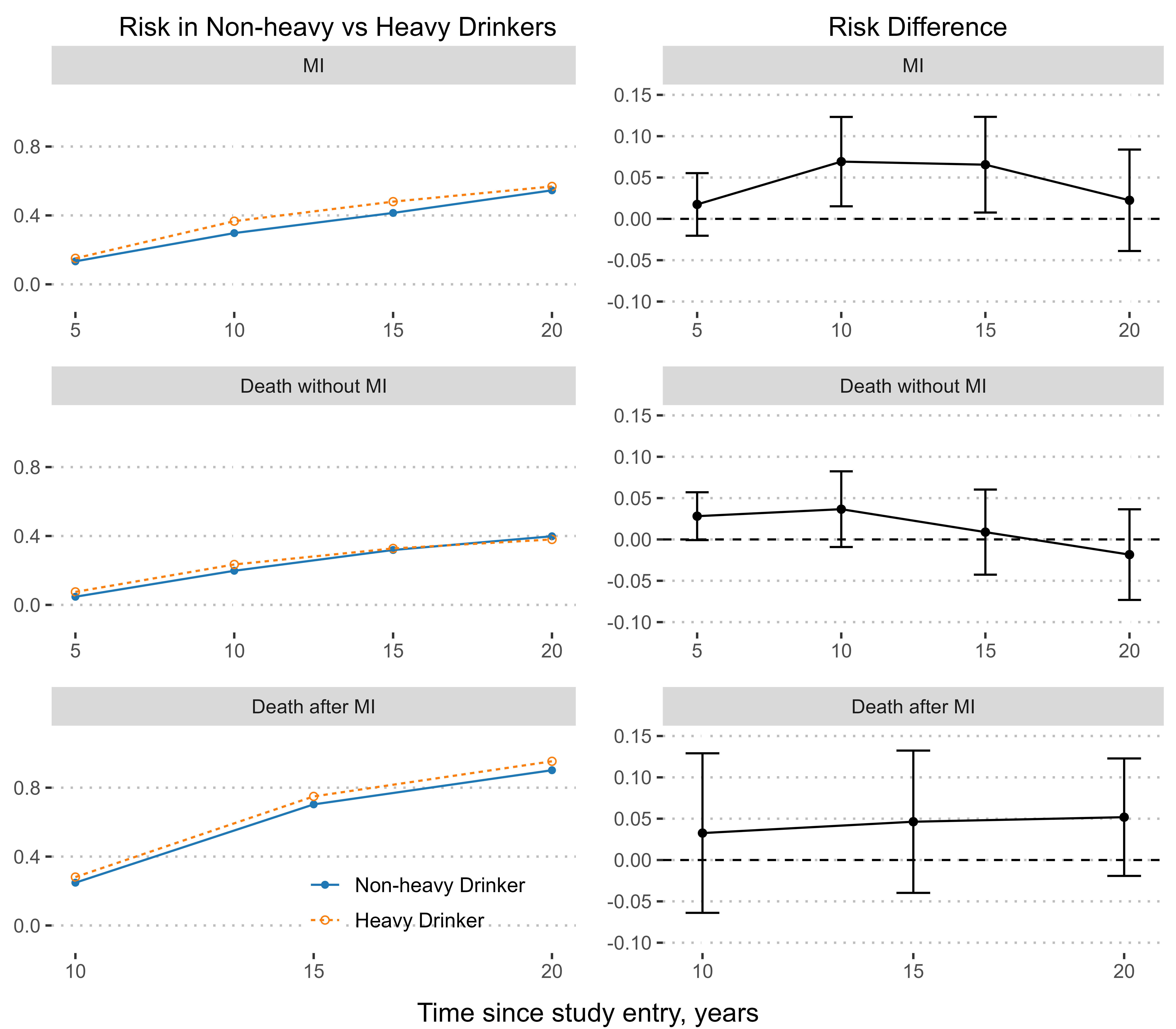}
        \caption{Estimated treatment-specific risks and risk differences for moderate impairment (MI), death without MI, and death after MI by 8 years. Error bars represent  95\% confidence intervals.}
\label{fig:haas}
\end{figure}

\clearpage
\pagenumbering{arabic}
\setcounter{page}{1}
\begin{singlespace}
\begin{center}
{\large\bf  }\\[1.2em]
\end{center}
\end{singlespace}
\bigskip

\section{Supplementary Material}
\setcounter{figure}{0}
\setcounter{table}{0}
\renewcommand{\thefigure}{S\arabic{figure}}
\renewcommand{\thetable}{S\arabic{table}}
\localtableofcontents
\newpage 
\subsection{Derivation of AIPCW\textsuperscript{sc} via monotone coarsening} \label{append:derive_AIPCW}

\begin{lemma}\label{lemma:monotone}
Censoring in semi-competing risks data is monotone coarsening.
\end{lemma}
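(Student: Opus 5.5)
We verify the definition of monotone coarsening directly from the formulation in \eqref{eq:monocoar}. We must exhibit, for every pair $r<s$ (with $s$ possibly equal to $\infty$), a fixed deterministic function $f_{r,s}$, not depending on $W$, such that $G_r(W)=f_{r,s}(G_s(W))$. Recall that for finite $r$ we have $G_r(W)=\{\mathbbm{1}(\Tearly\geqslant r),Z\}$, and $G_\infty(W)=W=(\Tearly, T_1\wedge\Tearly,\deltaearlyone,\deltaearlytwo,Z)$. The argument therefore has two cases, according to whether $s$ is finite.

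\emph{Case $s=\infty$.} Here $G_s(W)=W$ contains both $\Tearly$ and $Z$. We take $f_{r,\infty}(w)=\{\mathbbm{1}(u\geqslant r),z\}$, where $u$ and $z$ are the first and last components of $w$. Then $f_{r,\infty}(W)=G_r(W)$ identically.

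\emph{Case $r<s<\infty$.} Here $G_s(W)=\{\mathbbm{1}(\Tearly\geqslant s),Z\}$. Since $r<s$, we have $\{\Tearly\geqslant s\}\subseteq\{\Tearly\geqslant r\}$. This inclusion alone is not enough, because $\mathbbm{1}(\Tearly\geqslant r)$ is not a function of $\mathbbm{1}(\Tearly\geqslant s)$ for arbitrary $W$. The resolution is to use that $G_r$ is only ever evaluated on the event where coarsening at level $r$ can occur, namely $R=r$, which forces $C=r\leqslant\Tearly$. On this event $\mathbbm{1}(\Tearly\geqslant r)=1$, and likewise $\mathbbm{1}(\Tearly\geqslant s)=1$ on $\{R=s\}$. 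So, as in the univariate survival treatment of \citet[Chapter 9]{tsiatis2006semiparametric}, the informative content of $G_r(W)$ is the statement ``$\Tearly\geqslant r$'' together with $Z$. We then define $f_{r,s}(b,z)=(1,z)$. On the relevant coarsening levels this reproduces $G_r(W)$, because $\Tearly\geqslant s>r$ implies $\Tearly\geqslant r$.

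\emph{Main obstacle.} The difficulty lies in stating the second case rigorously. Following Tsiatis, I would phrase the coarsened data as the \emph{set} of full-data values consistent with the observation, $\mathcal{G}_r(W)=\{w': \tilde t^{u}\geqslant r,\ z'=Z\}$, where $\tilde t^{u}$ and $z'$ are the first and last components of $w'$. I would then show that these sets are nested, $\mathcal{G}_s(W)\subseteq\mathcal{G}_r(W)$ for $r<s$, with $\mathcal{G}_\infty(W)=\{W\}$. Nesting of these sets is the intended meaning of monotonicity: a later censoring time reveals at least as much as an earlier one, and $G_r$ is recovered from $G_s$ by enlarging the set.

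Finally, I would check that monotonicity is not broken by the semi-competing structure, that is, by the fact that $\Tearly$ may exceed $T=\min(T_1,T_2)$, for example $\Tearly=\min(T_2,t)$. The construction of $f_{r,s}$ uses only the scalar $\Tearly$ and $Z$, and never the event-type information. Hence the argument holds for any $\Tearly$ satisfying $\Tearly\leqslant T_2$.
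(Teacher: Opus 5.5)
Your proof is correct and follows essentially the same route as the paper. For $s=\infty$ you use the same map $f_{r,\infty}$. For finite $r<s$ you make the same restriction to the event $\{R=s\}$ (so $T^u\geqslant s>r$), on which both $G_r(W)$ and $G_s(W)$ equal $(1,Z)$.

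The only difference is cosmetic. The paper takes $f_{r,s}$ to be the identity, while you use the constant map $(b,z)\mapsto(1,z)$. The two agree on that event, and yours in fact reproduces $G_r(W)$ on all of $\{T^u\geqslant r\}$, the support of $R=r$. Your remark that neither choice is a pointwise identity for arbitrary $W$ is a fair caveat, and it applies equally to the paper's argument.
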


\begin{proof}[Proof of Lemma \ref{lemma:monotone}]
We need to show that for any $0<r<s\leqslant \infty$, there exists a fixed function $f_{r,s}$ such that $G_r(W) = f_{r,s}(G_s(W))$.
Define $f_{r,\infty}$ on arguments $g = (t^u, a, d_1, d_2, z)$ by
\[
f_{r,\infty}(g) = \{\mathbbm{1}(t^u\geqslant r),\, z\},
\]
and, for $r<s<\infty$, define $f_{r,s}$ on arguments $g = (i, z)$ by
\[
f_{r,s}(g) = (i,\, z).
\]
Consider the following two cases:
\begin{enumerate}
    \item If $s=\infty$, then $G_s(W) = W = (\Tearly, T_1\wedge\Tearly, \deltaearlyone, \deltaearlytwo, Z)$ and
    \[
    f_{r,\infty}(G_\infty(W)) = \{\mathbbm{1}(\Tearly\geqslant r), Z\} = G_r(W).
    \]

    \item If $r<s<\infty$, then on $\{R=s\}$ we have $C=s\leqslant\Tearly$, so $\Tearly\geqslant s$ and $G_s(W) = (1, Z)$. Since $r<s$, on this event we also have $\Tearly\geqslant r$, so that $G_r(W) = (1, Z)$. Therefore
    \[
    f_{r,s}(G_s(W)) = (1, Z) = G_r(W).
    \]
\end{enumerate}
Combining the two cases, we have $G_r(W) = f_{r,s}(G_s(W))$, establishing monotone coarsening.
\end{proof}

\begin{lemma}\label{lemma:CAR}
Assumption \ref{ass:aipcw_censoring} 
implies coarsening at random (CAR).
\end{lemma}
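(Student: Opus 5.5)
To prove Lemma \ref{lemma:CAR}, we need to show that $P(R=r\mid W)=w(r,G_r(W))$ for every $r\in(0,\infty]$. That is, the law of $R$ given the full data $W=(\Tearly, T_1\wedge \Tearly, \deltaearlyone, \deltaearlytwo, Z)$ must depend on $W$ only through $G_r(W)$. The plan is to compute $P(R=r\mid W)$ explicitly from the definition \eqref{eq:monocoar}. We then read off that it is a function of $G_r(W)$ alone. We treat the two cases $r<\infty$ (a density) and $r=\infty$ (a mass) separately.

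The key preliminary step is to show that $C\indep W\mid Z$. By Assumption \ref{ass:aipcw_censoring}, $C\indep (T_1,T_2)\mid Z$. The components of $W$ other than $Z$ are measurable functions of $(T_1,T_2)$ for the fixed $t$: $\Tearly$ is a function of $(T_1,T_2,t)$, and $T_1\wedge\Tearly$, $\deltaearlyone$, $\deltaearlytwo$ are functions of $(T_1,\Tearly)$ and $(T_2,\Tearly)$. Hence the conditional distribution of $C$ given $W$ equals that given $Z$, with density $f_c(\cdot\mid Z)=\lambda_c(\cdot\mid Z)S_c(\cdot\mid Z)$ and survival function $S_c(\cdot\mid Z)$. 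A small point to check is that $Z$ itself is a component of $W$, so conditioning on $W$ includes conditioning on $Z$. It then suffices that $(T_1,T_2)$, and hence $W$, are conditionally independent of $C$ given $Z$.

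Next comes the computation. For $r<\infty$, $\{R=r\}=\{C=r,\ r\leqslant \Tearly\}$, so the conditional density is
\[
P(R=r\mid W)=\mathbbm{1}(\Tearly\geqslant r)\,f_c(r\mid Z),
\]
which is a function of $r$ and $G_r(W)=\{\mathbbm{1}(\Tearly\geqslant r),Z\}$ only. For $r=\infty$, $P(R=\infty\mid W)=P(C>\Tearly\mid W)=S_c(\Tearly\mid Z)$. This depends on $W$ through $(\Tearly,Z)$, both of which are components of $G_\infty(W)=W$, so it is trivially a function of $G_\infty(W)$. Together these give CAR. As a byproduct, they also deliver the quantities $P(R=\infty\mid W)=S_c(\Tearly\mid Z)$ and $K_r(W)=S_c(\Tearly\wedge r\mid Z)$ used later.

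The main obstacle is measure-theoretic care rather than substance. First, $P(R=r\mid W)$ for finite $r$ is a density with respect to Lebesgue measure while the $r=\infty$ case is a mass, and the conditional density formula uses the conditional independence to replace the joint law of $(C,W)$ by a product. Second, one must handle the boundary case $C=\Tearly$ consistently with the convention in \eqref{eq:monocoar}, namely that $C=r\leqslant\Tearly$ counts as coarsened. If $C$ has a continuous conditional distribution, ties have probability zero and cause no difficulty. I would state this continuity explicitly, since the paper already presumes a conditional hazard $\lambda_c(r\mid Z)$ exists.
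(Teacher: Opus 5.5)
Your proposal is correct and follows the paper's proof essentially step for step. The paper likewise notes that $(\Tearly, T_1\wedge\Tearly,\deltaearlyone,\deltaearlytwo)$ is a function of $(T_1,T_2,t)$, so that $C\indep (\Tearly, T_1\wedge\Tearly,\deltaearlyone,\deltaearlytwo)\mid Z$; it then computes $P(R=\infty\mid W)=S_c(\Tearly\mid Z)$ and $P(R=r\mid W)=\mathbbm{1}(\Tearly\geqslant r)f_c(r\mid Z)$ and reads these off as $w(r,G_r(W))$, while your remark that continuity of $C$ disposes of ties at $C=\Tearly$ makes explicit a point the paper leaves implicit.
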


\begin{proof}[Proof of Lemma \ref{lemma:CAR}]
Let $\Hearly = (\Tearly, T_1\wedge \Tearly, \deltaearlyone, \deltaearlytwo)$. Then $W =(\Hearly, Z) $.
Since $\Hearly$ is a function of $(T_1, T_2, t)$, by Assumption \ref{ass:aipcw_censoring}, we have $C\indep \Hearly \mid  Z$.
Recall that CAR is defined as $P(R=r| W) = w(r,G_r(W))$ for some $w$.
Let $f_c$ and $S_c$ denote the conditional density and survival functions of $C$ given $Z$.
Define
\[w(r,g) = \begin{cases} S_c(t^u|Z=z), &  \quad r=\infty, \ g = (h^u, z) \text{ with } h^u = (t^u, a, d_1, d_2),\\ i\,f_c(r|Z=z), & \quad r<\infty, \ g = (i, z). \end{cases} \]
Now we verify that $P(R=r| W) = w(r,G_r(W))$.
First, for \(r=\infty\), since $G_\infty(W)=W=(\Hearly, Z)$,
\begin{align*}
P(R=\infty| W) & =P(\Tearly< C \mid  \Hearly, Z) \\
&=S_c(\Tearly | Z) \quad  [\text{by }C\indep \Hearly \mid  Z ], \\
w(\infty,G_\infty(W)) &= w(\infty,(\Hearly, Z))\\
&=S_c(\Tearly| Z).
\end{align*}
So $P(R=\infty| W) =w(\infty,G_\infty(W))$.
Next, for \(r<\infty\), the conditional density of \(R\) given \(W\) is
\begin{align*}
 P(R=r| W) &\equiv  \lim_{\Delta \to 0^+}  P(R \in [r, r+\Delta)|W)/\Delta\\
&=\lim_{\Delta \to 0^+} P(\Tearly \geqslant C, C \in [r, r+\Delta) \mid  \Hearly, Z)/\Delta \\
&=\mathbbm{1}(\Tearly\geqslant r) \lim_{\Delta \to 0^+}P( C \in [r, r+\Delta) \mid  \Hearly, Z)/\Delta \\
&=\mathbbm{1}(\Tearly\geqslant r)f_c(r \mid \Hearly, Z) \\
&=\mathbbm{1}(\Tearly\geqslant r)f_c(r | Z) \quad   [\text{by }C\indep \Hearly \mid  Z ],\\
w(r,G_r(W)) &= w(r,\{\mathbbm{1}(\Tearly\geqslant r), Z\})\\
&=\mathbbm{1}(\Tearly\geqslant r)f_c(r| Z).
\end{align*}
Hence, $P(R=r| W)=w(r,G_r(W))$ for $r<\infty$.
Combining the two cases, we have $P(R=r| W)=w(r,G_r(W))$,  establishing CAR.
\end{proof}

\bigskip
Now we derive the AIPCW\textsuperscript{sc} estimating function  in  \eqref{eq:aipcw}. 
In the following, we compute each of the quantities $P(R = \infty|W)$, $dM_R(r)$ and $K_r$ for our censoring problem. 
We have 
\begin{align}
P(R=\infty | W) & =P(\Tearly< C \mid  \Hearly, Z)\nonumber \\
& =S_c(\Tearly | Z) \quad [\text{by }C\indep \Hearly \mid Z].
\end{align}
It remains to compute $dM_R(r)$ and $K_r(W)$ for $r<\infty$.

\begin{enumerate}
  \item $dM_R(r)$: 
  Recall that $R=r < \infty$ when  $C=r \leqslant \Tearly$, while $R=\infty$ when  $\Tearly< C$. 
  Thus,  $\{R \geqslant r\}= \{r \leqslant C \leqslant \Tearly\} \cup \{\Tearly< C\} $, and $ \{ R \in [r, r+\Delta) \} = \{ C \in [r, r+\Delta), \Tearly\geqslant C\}$. 
  
Therefore 
\begin{align*}
\lambda_{r}(W)&\stackrel{\text{def}}{=} \lim_{\Delta \to 0^+}  {P(R \in [r, r+\Delta) | R\geqslant r, W) }/{\Delta}\\
&=\lim_{\Delta \to 0^+}  {P(C \in [r, r+\Delta), \Tearly \geqslant C \mid \{r \leqslant C \leqslant \Tearly\} \cup  \{\Tearly< C\} , \Hearly, Z ) }/{\Delta} \\
&=\mathbbm{1}(\Tearly\geqslant r)\lim_{\Delta \to 0^+}  {P(C \in [r, r+\Delta) \mid C\geqslant r, \Hearly, Z) }/{\Delta}\\
&=\mathbbm{1}(\Tearly\geqslant r)\lim_{\Delta \to 0^+}  {P(C \in [r, r+\Delta) \mid C\geqslant r, Z) }/{\Delta} \quad [\text{by }C\indep \Hearly\mid Z]\\
& =\mathbbm{1}(\Tearly\geqslant r)\lambda_{c}(r | Z).
\end{align*}
Note that 
\[
\mathbbm{1}(R=r)=\mathbbm{1}(C=r, \Tearly \geqslant r)= \mathbbm{1}( \Xearly\geqslant r) d N_{c}(r) ,
\]
and
\begin{align}
\lambda_{r}(W)\mathbbm{1}(R\geqslant r)&=\lambda_{c}(r | Z) \mathbbm{1}(\Tearly\geqslant r) \mathbbm{1}(R\geqslant r) \label{eq:lambda_r_ind}\\
& =\lambda_{c}(r | Z) Y_{c}(r) \mathbbm{1}(\Xearly \geqslant r). \nonumber
\end{align}
Therefore
$$
d M_R(r) = \mathbbm{1}( \Xearly\geqslant r) d M_{c}(r). 
$$

\item $K_r$: 
\begin{align}
K_r(W) & \stackrel{\text {def}}{=} P(R>r | W) \nonumber\\
& =P(\{\Tearly< C\} \cup\{r <C \leqslant \Tearly \}\mid \Hearly, Z) \nonumber\\
& = \begin{cases}P(C > r \mid \Hearly, Z) & \text { if } \Tearly> r \nonumber\\
P(C>\Tearly \mid \Hearly, Z) & \text { if } \Tearly\leqslant r\end{cases} \\
& = S_c(\Tearly \wedge r | Z), 
\label{eq:K_r}
\end{align} 
where $ \Tearly \wedge r = \min(\Tearly, r) $.

\end{enumerate}
Finally putting the above together we have
 the most efficient augmentation term 
\begin{align*}
& \int_{0}^{\infty} \frac{ \mathbbm{1}( \Xearly\geqslant r) d M_{c}(r)
}{S_c( \Tearly \wedge r|Z)} L(r, t ; \theta ; F ; Z) \\
= & \int_{0}^{\Xearly} \frac{L(r, t ; \theta ; F ; Z)}{S_c(r|Z)} d M_{c}(r) .
\end{align*}
Combining the IPW term and the augmentation term gives the AIPCW\textsuperscript{sc} formula presented in \eqref{eq:aipcw}.

\subsection{Derivation of the AIPTW estimating functions}\label{append:aiptw}

We focus on estimating $\theta^1$ first. Recall that  $ \fullestone$ is the full data estimating function. The IPTW estimating function for $\theta^1$ is 
$$
\Uiptw(t;\theta^1;\pi) = \frac{A}{\pi(Z)}\fullestone.
$$
By the method and conclusions described in \citet[Chapter~1, Page 87]{van2003unified}, let $\mathcal{T}$ denote the propensity score tangent space and $\Pi(\cdot|\mathcal{T})$ denote the projection operator onto $\mathcal{T}$, 
then we have: 
{\small
\begin{align}
\Pi(\Uiptw|\mathcal{T})&=E(\Uiptw|A,Z)-E(\Uiptw|Z) 
\nonumber\\
&=E\left\{\frac{A}{\pi(Z)} \fullestone\Big| A, Z\right\}-E\left\{\frac{A}{\pi(Z)} \fullestone\Big| Z\right\} \nonumber \\
& =\frac{A}{\pi(Z)} E\{\fullestone | A, Z\}-\frac{1}{\pi(Z)} E\{A\fullestone | Z\} \nonumber \\
& =\frac{A}{\pi(Z)} E\{\fullestone | Z\}-\frac{1}{\pi(Z)} E(A| Z)E\{\fullestone | Z\} \label{eq:aiptw_derive}\\
& =\left\{\frac{A}{\pi(Z)}-1\right\} E\{\fullestone | Z\}\nonumber,
\end{align}}
where \eqref{eq:aiptw_derive} follows by Assumption \ref{ass:causal_exchange}.
Thus the AIPTW estimating function is:
\begin{align}\label{eq:aiptw}
U_1^\text{aiptw}(t;\theta^1;\pi,F)&=\Uiptw - \Pi(\Uiptw|\mathcal{T})\nonumber  \\ 
&=\frac{A}{\pi(Z)}\fullestone +\left\{1-\frac{A}{\pi(Z)}\right\} E\{\fullestone | Z\}\nonumber.
\end{align}
Recall that $U^f(t; \theta^1)\equiv U^f(t;\theta^1; T_1,T_2, Z)$. Then, by Assumptions \ref{ass:causal_consistency} and \ref{ass:causal_exchange},
\begin{equation}\label{eq:aiptw1}
U_1^\text{aiptw}(t;\theta^1;\pi,F)=\frac{A}{\pi(Z)} U^f(t;\theta^1)+\left\{1-\frac{A}{\pi(Z)}\right\} E\{U^f(t;\theta^1) |A=1, Z\}.
\end{equation}
Similarly, the AIPTW estimating function for $\theta^0$ is: 
\begin{equation}
U_0^\text{aiptw}(t;\theta^0;\pi,F)=\frac{1-A}{1-\pi(Z)}U^f(t;\theta^0) +\left\{1-\frac{1-A}{1-\pi(Z)}\right\}E\{U^f(t;\theta^0) | A=0, Z\}.
\end{equation}

\subsection{Formulation of causal-AIPCW\textsuperscript{sc} via monotone coarsening}\label{append:causal_monotone}
\newcommand{\Tearlyone}{T^{u,1}}  
\newcommand{\Xearlyone}{X^{u,1}}  

We present the formulation for $\theta^1$; the formulation for $\theta^0$ is analogous. Let $\Tearlyone$ 
denote the earliest follow-up time at which the value of the estimating function
$\fullestone$ becomes fully observed; let  $W=(\Tearlyone, T_1^1\wedge \Tearlyone, \deltaearlyoneone, \deltaearlytwoone, A, Z)$ denote the data that we wish to observe, where $\deltaearlyoneone = \mathbbm{1}(T_1^1\leqslant \Tearlyone)$ and $\deltaearlytwoone = \mathbbm{1}(T_2^1\leqslant \Tearlyone)$; let $(R, G_R(W))$ denote the coarsened data  we actually observe. Here, $R$ is the coarsening variable such that when $R=r$, we observe a many-to-one function $G_r(W)$, instead of $W$ itself. By convention, $R=\infty$ denotes no coarsening. 
Specifically, we define $(R, G_R(W))$ in the following way: 

\begin{align}\label{eq:causal_monocoar}
& \left\{\begin{array}{l}
\text { if } \Tearlyone <  C^1, A=1, \text { then } R=\infty, G_{\infty}(W)=W; \\
\text { if } \Tearlyone\geqslant C^1=r>0, A=1, \text { then } R=r, G_{r}(W)= \{\mathbbm{1}(\Tearlyone\geqslant  r),A, Z\};\\
\text { if } A=0, \text { then } R=0, G_{0}(W)= \{A, Z\}.
\end{array} \right. 
\end{align}

\subsection{Identifiability and double robustness proofs}\label{append:DRproof}
\begin{proof}[Proof of Lemma \ref{lemma:DR}] 
In the following, we consider the two cases $S_c = S_c^o$ and $F =F^o$ separately.

a) We first consider the case with $S_c=S_c^o$. 
Recall that 
\begin{equation*}
U(t;\theta;S_c^o, F)= 
\frac{\mathbbm{1}(\Tearly< C) U^c(t ; \theta)}{S_c^o(\Xearly | Z)}
+\int_0^{\Xearly} \frac{L(r, t ; \theta ; F;Z)}{S_c^o(r | Z)} d M_c(r ; S_c^o).
\end{equation*}
For the first term above, since $ \Xearly = \min(\Tearly, C) $ we have
\begin{align*}
E\left\{\frac{\mathbbm{1}(\Tearly< C) U^c(t ; \theta)}{S_c^o(\Xearly| Z)}~\Big|~T_1, T_2, Z\right\}&= 
E\left\{\frac{\mathbbm{1}(\Tearly< C) U^c(t ; \theta)}{S_c^o(\Tearly| Z)}~\Big|~T_1, T_2, Z\right\}\\
&=\frac{U^c(t ; \theta)}{S_c^o(\Tearly | Z)}E\{\mathbbm{1}(C>\Tearly)~|~T_1, T_2, Z\} \\
&=\frac{U^c(t ; \theta)}{S_c^o(\Tearly | Z)}S_c^o(\Tearly | Z) \quad \text{[by Assumption \ref{ass:aipcw_censoring}]}\\
&=U^c(t ; \theta).
\end{align*} 
By the tower rule, we have 
\begin{equation}\label{eq:DRproof1_1}
E\left\{\frac{\mathbbm{1}(\Tearly< C) U^c(t ; \theta)}{S_c^o(\Xearly | Z)}\right\}=E\left[E\left\{\frac{\mathbbm{1}(\Tearly< C) U^c(t ; \theta)}{S_c^o(\Xearly | Z)}~\Big |~T_1,T_2,Z\right\}\right] =E\{U^c(t ; \theta)\}. 
\end{equation}
For the second term, we will show below that
\begin{equation}
\label{eq:DRproof1_2}
E\left\{\int_0^{\Xearly} \frac{L(r, t ; \theta ; F;Z)}{S_c^o(r | Z)} d M_c(r ; S_c^o)\right\}=0.
\end{equation}
Combining \eqref{eq:DRproof1_1} and \eqref{eq:DRproof1_2} then, we have $E\{U(t;\theta;S_c^o, F)\}=E\{U^c(t;\theta)\}$ for all $\theta$. 

The slight challenge in showing \eqref{eq:DRproof1_2} is to show that $\mathbbm{1}(\Xearly \geqslant r)$ is predictable with respect to 
the filtration of $M_c$ which is $\{\mathcal{F}_t\}_{t\geq0}=\sigma(Z, \mathbbm{1}(T_2\leqslant \mu), \mathbbm{1}(C\leqslant \nu):\mu \leqslant t, \nu \leqslant t)$.  
Instead consider $\tilde M_c(s;S_c^o) = \mathbbm{1} (C\leqslant s)- \int_0^s\mathbbm{1}( C\geqslant r)\lambda_c(r| Z)dr$. Recall that $M_c(t ; S_c)=N_{c}(t)-\int_{0}^{t} Y_{c}(r) \lambda_{c}(r| Z) dr$, where $N_{c}(t)=\mathbbm{1}( X_{2} \leqslant t, \delta_{2}=0)$, 
$Y_{c}(t)=\mathbbm{1}( X_{2} \geqslant t)$. For $r\in (0,\Xearly)$, we have 
$r < \Xearly\leqslant \Tearly\leqslant T_2$, so
\begin{align*}
    dM_c(r;S_c^o) 
    & = \mathbbm{1}(X_2 = r, \delta_2 = 0) - \mathbbm{1}( X_{2} \geqslant r)\lambda_c(r|Z) dr\\
    & = \mathbbm{1}(C = r, T_2>r) - \mathbbm{1}(C\geqslant r, T_2 \geqslant r)\lambda_c(r|Z) dr \\
    & = \mathbbm{1}(C = r) - \mathbbm{1}(C\geqslant r)\lambda_c(r|Z) dr \\
    & = d\tilde M_c(r;S_c^o).
\end{align*}
Therefore,
{\small
\begin{align}
E\left\{\int_0^{\Xearly} \frac{L(r, t ; \theta ; F;Z)}{S_c^o(r | Z)} d M_c(r ; S_c^o)\right\} 
&= 
E\left\{\int_0^{\Xearly} \frac{L(r, t ; \theta ; F;Z)}{S_c^o(r | Z)}  d \tilde M_c(r ; S_c^o)\right\} \nonumber \\
&=E\left\{\int_0^{\tau} \mathbbm{1}(\Xearly \geqslant  r)\frac{L(r, t ; \theta ; F;Z)}{S_c^o(r | Z)}\, d \tilde M_c(r ; S_c^o)\right\}. \label{eq:DRproof1_3}
\end{align}}

We now show that $\tilde M_c(s;S_c^o)$ is a martingale with respect to the filtration $\{\mathcal{H}_s\}_{s\geqslant0}=\sigma(Z, \mathbbm{1}(\Tearly\leqslant \mu), \mathbbm{1}(C\leqslant \nu):\mu,\nu\leqslant s)$.  
This is because $\Tearly$ is a function of $(T_1,T_2,t)$, Assumption \ref{ass:aipcw_censoring} implies $C\indep \Tearly\mid Z$. 
In addition, $\tilde M_c(s;S_c^o)$ is a martingale with respect to the filtration $\{\tilde{\mathcal H}_s\}_{s\geqslant0}=\sigma(Z,\mathbbm{1}(C\leqslant \nu):\nu\leqslant s)$. This together with $C\indep \Tearly\mid Z$ implies that $\tilde M_c(s;S_c^o)$ is also a martingale with respect to  $\{\mathcal{H}_s\}_{s\geqslant0}$. 
Finally, the integrand on the second line of \eqref{eq:DRproof1_3} is predictable with respect to $\{\mathcal{H}_s\}_{s\geqslant0}$, which implies that the second line of \eqref{eq:DRproof1_3} is zero, so \eqref{eq:DRproof1_2} follows.

\bigskip
b) We now consider the case with $F = F^o$. 
We first rewrite the expression of $U$, which will be useful in the later proof. Recall that $\Xearly = \min(\Tearly, C)$, $N_{c}(r)=\mathbbm{1}( X_2\leqslant r, \delta_{2}=0)$, $X_2 = \min(T_2, C)$ and $\delta_2 = \mathbbm{1}(X_2 = T_2)$.
Since 
\begin{align*}
\int_0^{\Xearly} \frac{d N_c(r)}{S_c(r | Z)}&=\frac{\mathbbm{1}(X_2\leqslant \Xearly,\delta_2=0)}{S_c(X_2 | Z)}
\\
&=\frac{\mathbbm{1}(C\leqslant \Xearly,C< T_2)}{S_c(C | Z)}=\frac{\mathbbm{1}(C\leqslant  \Tearly)}{S_c(\Xearly | Z)},
\end{align*}
$$\int_0^{\Xearly} \frac{\lambda_c(r| Z)}{S_c(r | Z)}dr = \int_0^{\Xearly} \frac{f_c(r| Z)}{S_c^2(r | Z)}dr=\frac{1}{S_c(\Xearly| Z)}-1,$$
we have 
\begin{align*}
\int_0^{\Xearly} \frac{d M_c(r; S_c)}{S_c(r | Z)}&=\int_0^{\Xearly} \frac{d N_c(r)}{S_c(r | Z)} -\int_0^{\Xearly} \frac{\lambda_c(r| Z)}{S_c(r | Z)}dr\\
&=1-\frac{\mathbbm{1}(C>\Tearly)}{S_c(\Xearly| Z)}.
\end{align*}
So 
$$\frac{\mathbbm{1}(C>\Tearly)}{S_c(\Xearly| Z)} =1-\int_0^{\Xearly} \frac{d M_c(r; S_c)}{S_c(r | Z)}.$$
Then,  
\begin{equation}
U(t;\theta;S_c, F^o) 
= U^c(t ; \theta)
-  \int_0^{\Xearly}\{U^c(t ; \theta) -L(r, t ; \theta ; F^o)\}\frac{d M_c(r ; S_c)}{S_c(r | Z)}. \label{eq:DRproof2_1}
\end{equation}
We will show that the second term in \eqref{eq:DRproof2_1} has expectation 0. 
We have 
\begin{align}
&E\left[\left. \int_0^{\Xearly} \{U^c(t ; \theta)-L(r, t ; \theta; F^o;Z)\}\frac{d M_c(r ; S_c)}{S_c(r | Z)} ~\right|~ C, Z\right] \nonumber\\
=&E\left[\left.\int_0^{\infty} \{U^c(t ; \theta)-L(r, t ; \theta ; F^o;Z)\}\mathbbm{1}( \Xearly \geqslant r)\frac{d \tilde M_c(r ; S_c)}{S_c(r | Z)} ~\right|~ C, Z\right] \label{eq:DRproof3}\\
=&\int_0^{\infty} E\left[\{U^c(t ; \theta)-L(r, t ; \theta ; F^o;Z)\}\frac{\mathbbm{1}( \Xearly \geqslant r)}{S_c(r | Z)} \Big |~C, Z\right]d\tilde M_c(r ; S_c), \nonumber
\end{align}
where \eqref{eq:DRproof3} holds because by definition $\Xearly\leqslant \Tearly\leqslant T_2$.
Since $\mathbbm{1}(\Xearly\geqslant r) = \mathbbm{1}(\Tearly\geqslant r)\mathbbm{1}(C\geqslant r)$, we have that the integrand in the above satisfies
\begin{align}
&E\left[\{U^c(t ; \theta)-L(r, t ; \theta ; F^o;Z)\}\frac{\mathbbm{1}(\Xearly\geqslant r)}{S_c(r | Z)} ~\Big |~C, Z\right] \nonumber\\
=&\frac{\mathbbm{1}( C\geqslant r)}{S_c(r | Z)}E\left[\{U^c(t ; \theta)-L(r, t ; \theta ; F^o;Z)\}\mathbbm{1}(\Tearly\geqslant r) \mid C, Z\right] \nonumber \\
=&\frac{\mathbbm{1}( C\geqslant r)}{S_c(r | Z)}[E\{U^c(t ; \theta)\mathbbm{1}(\Tearly\geqslant r)|C,Z \}
-E\{L(r, t ; \theta ; F^o;Z) \mathbbm{1}(\Tearly\geqslant r)  \mid C, Z\}] \nonumber\\
=&\frac{\mathbbm{1}( C\geqslant r)}{S_c(r | Z)}[E\{U^c(t ; \theta)\mathbbm{1}(\Tearly\geqslant r)|C,Z \}
-L(r, t ; \theta ; F^o;Z) E\{\mathbbm{1}(\Tearly\geqslant r)  \mid C, Z\}] \nonumber\\
=&\frac{\mathbbm{1}(C\geqslant r)}{S_c(r | Z)}[E\{U^c(t ;\theta)\mathbbm{1}(\Tearly\geqslant r)| Z\}-E\{U^c(t ; \theta)\mid \Tearly\geqslant r, Z\}P(\Tearly\geqslant r | Z)] \label{eq:DRproof2_2}\\
=&0. \nonumber
\end{align}
Equation \eqref{eq:DRproof2_2} holds 
because, since $\Tearly$ is a function of $(T_1, T_2, t)$, Assumption \ref{ass:aipcw_censoring} gives $ \Tearly \indep C\mid Z$, so the conditioning on $C$ can be dropped. The final equality follows because $E\{U^c(t;\theta)\mathbbm{1}(\Tearly\geqslant r)| Z\} = E\{U^c(t;\theta)\mid \Tearly\geqslant r, Z\}P(\Tearly\geqslant r| Z)$ by the definition of conditional expectation.
This, together with \eqref{eq:DRproof2_1}, implies $E\{U(t ; \theta;S_c, F^o)\} = E\{U^c(t ; \theta)\}$ for all $\theta$. 

Combining the two cases, we have proved the result and identifiability follows trivially under Assumption \ref{ass:esteq}.
\end{proof}

\vskip .2in
\begin{proof}[Proof of Lemma \ref{lemma:causalDR}]  \label{append:DRproof_causal}

By \cite{van2003unified} Lemma 1.9, it can be shown that under Assumptions \ref{ass:causal_sutva}, \ref{ass:causal_consistency}, \ref{ass:causal_exchange}, \ref{ass:causal_positive}, for $a=0,1$,  $E\{U_a^\text{aiptw}(t;\theta^a;\pi, F)\}=E\{\fullest\}$ for all $\theta^a$ if either $\pi = \pi^o$ or $F = F^o$.

Now we consider the following two cases.
a) $S_c=S_c^o$ and $\pi = \pi^o$. In this case, $E\{U_a^\text{aiptw}(t;\theta^a;\pi^o, F)\}=E\{\fullest\}$ for all $\theta^a$. Now we treat $U_a^\text{aiptw}(t;\theta^a;\pi^o, F)$ as the censoring-free data estimating function.
Recall that the causal-AIPCW\textsuperscript{sc} estimating function $U^a(t;\theta^a; S_c, \pi, F)$ is derived by applying our AIPCW\textsuperscript{sc} method to $U_a^\text{aiptw}$. Thus, under Assumptions \ref{ass:causal_positive} and \ref{ass:causal_censoring}, by Lemma \ref{lemma:DR}, $E\{U^a(t;\theta^a; S_c^o, \pi^o, F) \}=E\{U_a^\text{aiptw}(t;\theta^a;\pi^o, F)\}$ for all $\theta^a$. Thus $E\{U^a(t;\theta^a; S_c, \pi^o, F) \}=E\{\fullest\}$ for all $\theta^a$.
 b) $F=F^o$. In this case,  $E\{U_a^\text{aiptw}(t;\theta^a;\pi, F^o)\}=E\{\fullest\}$ for all $\theta^a$. Now we treat $U_a^\text{aiptw}(t;\theta^a;\pi, F^o)$ as the censoring-free data estimating function. Again under Assumptions \ref{ass:causal_positive} and \ref{ass:causal_censoring}, by Lemma \ref{lemma:DR}, $E\{U^a(t;\theta^a; S_c, \pi, F^o)\}=E\{U_a^\text{aiptw}(t;\theta^a;\pi, F^o)\}$ for all $\theta^a$. Thus, $E\{U^a(t;\theta^a; S_c, \pi, F^o)\}=E\{\fullest\}$  for all $\theta^a$. Combining the two cases, we have proved the result and  identifiability follows trivially under Assumption \ref{ass:causal_esteq}.  
\end{proof}

\subsection{Derivation of estimators for  $\Lambda_j^a(t)$, $j=1,2,3,a=0,1$}\label{append:derive_est}
We give detailed derivations of $\hat{\Lambda}_1^1(t)$ and $\hat{\Lambda}_3^1(t)$. The derivations of $\hat{\Lambda}_1^0(t)$, $\hat{\Lambda}_2^1(t)$, $\hat{\Lambda}_2^0(t)$ and $\hat{\Lambda}_3^0(t)$ are similar. 

Recall from \eqref{eq:c-aipcw} that the causal-AIPCW\textsuperscript{sc} estimating function for $\theta^1$ is: 
\begin{align}\label{eq:est_theta1}
U^1(t;\theta^1;S_c,\pi, F)&= 
\frac{A}{\pi(Z)}\frac{ \mathbbm{1}(\Tearly< C)} {S_c(\Xearly \mid A=1, Z)} U^f(t ; \theta^1) \nonumber\\
&\quad 
+\left\{1-\frac{A}{\pi(Z)} \right\} E\{U^f(t ; \theta^1) \mid A=1, Z\}\nonumber\\
+ & \frac{A}{\pi(Z)} \int_0^{\Xearly} \frac{E\{U^f(t;\theta^1) \mid \Tearly \geqslant r, A=1, Z\}}{S_c(r | A=1, Z)} d \causalMc(r; S_c),
\end{align}
where $\causalMc(t ; S_c)=\mathbbm{1}(X_2\leqslant t, \delta_2 = 0)-\int_0^t \mathbbm{1}(X_2 \geqslant u) \lambda_c(u | A,Z) du$.

To estimate $\Lambda_1^1(t)$, we consider  $\theta^1= \Lambda^1_1(t)$, $U^f(t; \theta^1)=dM_1^f(t;\theta^1)$, where $M_1^f(t;\theta^1) \equiv M_1^f(t;\theta^1;T_1, T_2) =  \mathbbm{1}(T \leqslant t, T_1\leqslant T_2)-\int_0^t\mathbbm{1}(T\geqslant u) d\Lambda_1^1(u)$,
and $\Tearly = \min(T, t).$ Now we plug these into \eqref{eq:est_theta1} above 
which consists of three terms, and we evaluate them separately. 

For the first term, we have 
\begin{equation*}
U^{f}(t;\theta^1) = dM_1^f(t;\theta^1)=\mathbbm{1}(T_1= t, T_1\leqslant T_2)-\mathbbm{1}(T\geqslant t) d\Lambda_1^1(t)
\end{equation*} 
and 
\begin{equation}\label{eq:lambda1_term1}
\frac{A}{\pi(Z)}\frac{ \mathbbm{1}(\Tearly< C)} {S_c(\Xearly | A=1, Z)} U^f(t ; \theta^1) = \frac{A\{\mathbbm{1}(X_1=t)\delta_1-\mathbbm{1}( X_1 \geqslant t)d\Lambda_1^1(t)\}}{\pi(Z)S_c(t| A=1,Z)}
\end{equation}
For the second term, we need to compute $E\{U^f(t;\theta^1) \mid A=1, Z\}$. For the third term, we need to  compute $E\{U^f(t;\theta^1) \mid \Tearly \geqslant r, A=1, Z\}$ when $t\geqslant r$, which reduces to  $E\{U^f(t;\theta^1) \mid A=1, Z\}$ by taking $r=0$. Thus, it suffices to focus on $E\{U^f(t;\theta^1) \mid  \Tearly \geqslant r, A=1, Z\}$. Let $N_1^f(t) = \mathbbm{1}(T \leqslant t, T_1\leqslant T_2)$ and $Y_1^f(t) = \mathbbm{1}(T\geqslant t)$. Then $M_1^f(t;\theta^1) = N_1^f(t)  - \int_0^tY_1^f(u) d\Lambda_1^1(u)$. We first compute $E\{N_1^f(t) \mid  \Tearly \geqslant r, A, Z\}$ and $E\{Y_1^f(t) \mid  \Tearly \geqslant r, A, Z\}$ when $t\geqslant r$ as follows: 
\begin{align*}
E\{N_1^f(t) \mid  \Tearly \geqslant r, A, Z\} &= P(T \leqslant t, T_1\leqslant T_2\mid \min(T, t)\geqslant r, A, Z)\\
&=  P(T \leqslant t, T_1\leqslant T_2\mid  T\geqslant r, A, Z)\\
&=  \frac{P(T \leqslant t, T_1\leqslant T_2, T \geqslant r\mid  A, Z)}{S(r| A,Z)}\\
&= \frac{P(r\leqslant T_1 \leqslant t, T_1\leqslant T_2\mid  A, Z)}{S(r| A,Z)}\\
&= \frac{\int_r^tS(u| A,Z)d\Lambda_1(u| A,Z) }{S(r| A,Z)}\\
E\{Y_1^f(t)\mid \Tearly \geqslant r, A, Z\} &= P(T\geqslant t\mid \min(T,t) \geqslant r, A,Z)\\
&= P(T\geqslant t\mid T \geqslant r, A,Z)\\
&= \frac{P(T\geqslant t\mid  A,Z)}{P(T\geqslant r\mid  A,Z)}\\
&= \frac{S( t|  A,Z)}{S(r|  A,Z)}
\end{align*}
Thus, 
\begin{equation*}
E\{M_1^f(t;\theta^1) \mid  \Tearly \geqslant r, A, Z\} = \frac{\int_r^tS(u| A,Z)d\Lambda_1(u| A,Z)-\int_r^tS(u| A,Z)d\Lambda_1^1(u)}{S(r| A,Z)}
\end{equation*}
and 
\begin{equation}
\label{eq:lambda1_h}
E\{U^f(t;\theta^1) \mid  \Tearly \geqslant r, A, Z\} = \frac{S(t| A,Z)\{d\Lambda_1(t| A,Z)-d\Lambda_1^1(t)\}}{S(r| A,Z)}. 
\end{equation}
Denote the numerator in \eqref{eq:lambda1_h} as $h_1(t, A, Z)$.
Then the second term in \eqref{eq:c-aipcw} becomes 
\begin{equation}\label{eq:lambda1_term2}
\left\{1-\frac{A}{\pi(Z)} \right\}h_1(t, 1, Z). 
\end{equation}
For the third term, we have 
\begin{align*}
\int_0^{\Xearly} \frac{E\{U^f(t;\theta^1) \mid  \Tearly \geqslant r, A=1, Z\}}{S_c(r |  A=1, Z)} d \causalMc(r; S_c) &= h_1(t, 1, Z)  \int_0^{\Xearly}\frac{d\causalMc(r;S_c)}{S(r| A=1,Z)S_c(r| A=1, Z)}.
\end{align*}
Let $ \tilde{X}_1 = \min(X_1, t)$.  We have  
\begin{align*}
\int_0^{\Xearly}\frac{d\causalMc(r;S_c)}{S(r| A,Z)S_c(r| A, Z)}  & = \frac{(1-\delta_2) \mathbbm{1}(X_2\leqslant \tilde{X}_1) }{ S(X_2 |  A, Z) S_c(X_2|  A, Z) } +\int_0^{\tilde{X}_1} \frac{d S_c(r |  A, Z) }{S(r |  A, Z) S_c^{2}(r|  A, Z) } \\
&\stackrel{\text {denote as}}{=}  m_1(\delta_2, X_1, X_2, A, Z, t).
\end{align*}
Thus the third term in \eqref{eq:c-aipcw} becomes 
\begin{equation}\label{eq:lambda1_term3}
\frac{A}{\pi(Z)}h_1(t, 1, Z) m_1(\delta_2, X_1, X_2, 1, Z, t)
\end{equation}
Combining the three terms in \eqref{eq:lambda1_term1}, \eqref{eq:lambda1_term2} and \eqref{eq:lambda1_term3} gives the following estimating function:
\begin{align}
U^1(t;\theta^1;S_c, \pi, F)&=\frac{A\{\mathbbm{1}(X_1=t)\delta_1-\mathbbm{1}(X_1 \geqslant t)d\Lambda_1^1(t)\}}{\pi(Z)S_c(t|A=1,Z)} \nonumber \\ 
&+\left\{1-\frac{A}{\pi(Z)}+\frac{A}{\pi(Z)} m_1(\delta_2, X_1, X_2, 1, Z, t)\right\}h_1(t, 1, Z). 
\end{align}
Solving $\sum_{i=1}^n U_i^1(t;\theta^1 ;\hat{S_c}, \hat{\pi},\hat{F})=0$ gives the expression of $\hat \Lambda_1^1(t)$ as in \eqref{eq:Lambda1}.

\bigskip
To estimate $\Lambda_3^1(t)$, we consider $\theta^1= \Lambda_3^1(t)$, $U^f(t; \theta^1) = dM_3^f(t;\theta^1)$
where $ M_3^f(t;\theta^1) \equiv  M_3^f(t;\theta^1;T_1, T_2) =\mathbbm{1}( T_1\leqslant T_2\leqslant t)-\int_0^t\mathbbm{1}(T_2\geqslant u > T_1) d\Lambda_3^1(u)$,
and $\Tearly = \min(T_2, t).$ Now we plug these into formula \eqref{eq:c-aipcw} which  consists of three terms. 
For the first term, we have
\begin{equation*}
U^{f}(t;\theta^1) = dM_3^f(t;\theta^1)=\mathbbm{1}(T_1\leqslant T_2=t)-\mathbbm{1}(T_2\geqslant t > T_1) d\Lambda_3^1(t)
\end{equation*} 
and 
\begin{equation}\label{eq:lambda3_term1}
\frac{A}{\pi(Z)}\frac{ \mathbbm{1}(\Tearly <C)} {S_c(\Xearly |  A=1, Z)} U^f(t ; \theta^1) = \frac{A\{\mathbbm{1}(X_2=t)\delta_1\delta_2-\mathbbm{1}(X_1 <t\leqslant X_2)\delta_1 d\Lambda_3^1(t)\}}{\pi(Z)S_c(t| A=1,Z)}.
\end{equation}
For the second and the third term, as mentioned in the previous derivation for $\hat \Lambda_1^1(t)$, we need to compute $E\{U^f(t;\theta^1) \mid  \Tearly \geqslant r, A=1, Z\}$. Let $N_3^f(t) = \mathbbm{1}(T_1\leqslant T_2\leqslant t)$ and $Y_3^f(t) = \mathbbm{1}(T_2\geqslant t> T_1)$. Then $M_3^f(t;\theta^1) = N_3^f(t)  - \int_0^tY_3^f(u) d\Lambda_3^1(u)$. We first compute $E\{N_3^f(t) \mid  \Tearly \geqslant r, A, Z\}$ and $E\{Y_3^f(t) \mid  \Tearly \geqslant r, A, Z\}$ when $t\geqslant r$ as follows: 
\begin{align*}
E\{N_3^f(t) \mid  \Tearly \geqslant r, A, Z\} &= P( T_1\leqslant T_2 \leqslant t \mid \min(T_2, t)\geqslant r, A, Z)\\
&=  P(T_1\leqslant T_2\leqslant t\mid  T_2\geqslant r, A, Z)\\
&=  \frac{P(r\leqslant T_2 \leqslant t, T_1\leqslant T_2\mid  A, Z)}{S_2(r|A,Z)}\\
E\{Y_3^f(t)\mid \Tearly \geqslant r, A, Z\} &= P(T_1<t \leqslant  T_2 \mid \min(T_2,t) \geqslant r, A,Z)\\
&= P(T_1<t \leqslant  T_2\mid T_2 \geqslant r, A,Z)\\
&= \frac{P(T_1<t \leqslant  T_2 \mid  A,Z)}{S_2(r|  A,Z)}\\
\end{align*}
Thus, 
\begin{equation*}
E\{M_3^f(t;\theta^1) | \Tearly \geqslant r, A=1, Z\} = \frac{P(r \leqslant T_2 \leqslant t, T_1\leqslant T_2 |  A, Z)-\int_0^t P(T_1<u \leqslant  T_2 | A=1,Z)d\Lambda_3^1(u)}{S_2(r| A=1,Z)}.
\end{equation*}
Since 
$$
P(T_1<t \leqslant  T_2\mid  A,Z) = S_2(t| A, Z) - S(t| A, Z),
$$
and 
\begin{align*}
P(T_1\leqslant T_2\leqslant t\mid A, Z) &= 1- P(T_2\leqslant  t, T_1=\infty \mid  A,Z) -S_2(t| A, Z)\\
&= 1- \int_0^t S(u| A,Z)d\Lambda_2(u | A,Z) -S_2(t| A, Z),
\end{align*}
we have 
\begin{equation}\label{eq:lambda3_h}
E\{U^f(t;\theta^1) |  \Tearly \geqslant r, A, Z\} = \frac{-S(t| A,Z)d\Lambda_2(t | A,Z) -dS_2(t| A, Z)-\{S_2(t| A,Z)-S(t| A,Z)\}d\Lambda_3^1(t)}{S_2(r| A,Z)}. 
\end{equation}
Denote the numerator in \eqref{eq:lambda3_h} as $h_3(t, A, Z)$.
Then the second term in \eqref{eq:c-aipcw} becomes 
\begin{equation}\label{eq:lambda3_term2}
\left\{1-\frac{A}{\pi(Z)} \right\}h_3(t, 1, Z). 
\end{equation}
For the third term, we have 
\begin{align*}
 \int_0^{\Xearly} \frac{E\{U^f(t;\theta^1) \mid  \Tearly \geqslant r, A=1, Z\}}{S_c(r |  A=1, Z)} d \causalMc(r; S_c) = h_3(t, 1, Z)\int_0^{\Xearly}\frac{d\causalMc(r;S_c)}{S_2(r| A=1,Z)S_c(r| A=1, Z)}.
\end{align*}
Let $ \tilde{X}_2 = \min(X_2, t)$.  We have  
\begin{align*}
\int_0^{\Xearly}\frac{d\causalMc(r;S_c)}{S_2(r| A,Z)S_c(r| A, Z)}  & = \frac{(1-\delta_2) \mathbbm{1}(X_2\leqslant \tilde{X}_2) }{ S_2(X_2 |  A, Z) S_c(X_2|  A, Z) } +\int_0^{\tilde{X}_2} \frac{d S_c(r |  A, Z) }{S_2(r | A, Z) S_c^{2}(r|  A, Z) } \\
&\stackrel{\text {denote as}}{=}  m_3(\delta_2, X_2, A, Z, t).
\end{align*}
Thus the third term in \eqref{eq:c-aipcw} becomes 
\begin{equation}\label{eq:lambda3_term3}
\frac{A}{\pi(Z)} h_3(t, 1, Z) m_3(\delta_2, X_2, 1, Z, t)
\end{equation}
Combining the three terms in \eqref{eq:lambda3_term1}, \eqref{eq:lambda3_term2} and \eqref{eq:lambda3_term3} gives the following estimating function:
\begin{align}
U^1(t;\theta^1;S_c, \pi, F)&= \frac{A\{\mathbbm{1}(X_2=t)\delta_1\delta_2-\mathbbm{1}(X_1 <t\leqslant X_2)\delta_1 d\Lambda_3^1(t)\}}{\pi(Z)S_c(t| A=1,Z)} \nonumber \\ 
&+\left\{1-\frac{A}{\pi(Z)}+   \frac{A}{\pi(Z)} m_3(\delta_2, X_2, 1, Z, t)\right\} h_3(t, 1, Z). 
\end{align}
Solving $\sum_{i=1}^n U_i^1(t;\theta^1 ;\hat{S_c}, \hat{\pi},\hat{F})=0$ gives the expression of $\hat \Lambda_3^1(t)$ as in \eqref{eq:Lambda3}.

\subsection{IPW and naive Nelson-Aalen estimators}
For comparison,  we write out the corresponding IPW estimators that are not doubly robust. We also consider the naive Nelson-Aalen estimators (denoted as NA) which ignore confounding and covariate-dependent censoring. The expressions of these estimators are as follows.
For $a=0,1:$
\begin{align*}
\hat{\Lambda}_{1, \text{IPW}}^a(t) &= \int_0^t\frac{
\sum_{i=1}^n Q_{1i}^a(u)dN_{1i}(u) }{
\sum_{i=1}^n Q_{1i}^a(u)Y_{1i}(u)},\quad 
\hat{\Lambda}_{2, \text{IPW}}^a(t) =\int_0^t \frac{\sum_{i=1}^n Q_{1i}^a(u)dN_{2i}(u) }{
\sum_{i=1}^n Q_{1i}^a(u)Y_{1i}(u)},\\
\hat{\Lambda}_{3, \text{IPW}}^a(t)&=\int_{0}^{t}\frac{
\sum_{i=1}^n Q_{1i}^a(u)dN_{3i}(u)}{
\sum_{i=1}^n Q_{1i}^a(u)Y_{3i}(u)},\\
\hat{\Lambda}_{1, \text{NA}}^a(t) &= \int_0^t\frac{
\sum_{i=1}^n A_i^a(1-A_i)^{(1-a)}dN_{1i}(u) }{
\sum_{i=1}^n A_i^a(1-A_i)^{(1-a)}Y_{1i}(u)},\quad 
\hat{\Lambda}_{2, \text{NA}}^a(t) =\int_0^t \frac{\sum_{i=1}^n A_i^a(1-A_i)^{(1-a)}dN_{2i}(u) }{
\sum_{i=1}^n A_i^a(1-A_i)^{(1-a)}Y_{1i}(u)},\\
\hat{\Lambda}_{3, \text{NA}}^a(t)&=\int_{0}^{t}\frac{
\sum_{i=1}^n A_i^a(1-A_i)^{(1-a)}dN_{3i}(u)}{
\sum_{i=1}^n A_i^a(1-A_i)^{(1-a)}Y_{3i}(u)}.\\
\end{align*}

\subsection{Simulation details}\label{append:sim_data}
The semi-competing risks data were simulated using an approach adapted from \cite{zhang2024marginal} in the following steps, with an illustration in Figure \ref{fig:sim_dag}:
\begin{enumerate}
    \item Generate $U_1 \sim U(0,1)$ and $U_2 \sim U(0,1)$;

    \item Generate confounder $Z=(Z_1, Z_2)^{\top}$, with $Z_j=U_j-0.5,$ $j=1,2$

    \item Let $f^{-1}$ denote the inverse function of a function $f$. Generate $A \sim \operatorname{Bernoulli}(p_A)$, where $p_A=\operatorname{logit}^{-1}(\alpha_0+\alpha_1 Z_1+\alpha_2 Z_2)$, with $(\alpha_0,  \alpha_1, \alpha_2) = (0,1.5,-1.5)$;
\item Let $\Lambda_{0}^0(t) = 0.2t$, $\Lambda_{0}^1(t) = 0.3t$. Generate $T_1^a$, $T_2^a$ as follows: \\
with probability 1/2,
\[
T_1^a=\infty,\quad
T_2^a=\Lambda_{0}^{a, -1}\left(-\frac{\log (U_1)}{2}\right)
\]
with probability 1/2,
\[
T_1^a=\Lambda_{0}^{a, -1}(-\frac{\log(U_1)}{2}),\quad T_2^a=\Lambda_{0}^{a, -1}(-\log (U_2)+\Lambda_{0}^a(t_1))
\]
\item Generate censoring time $C^a\mid  Z \sim \text{Exp}(\exp(\gamma_0+\gamma_1Z_1+\gamma_2Z_2 +\gamma_3a))$, where $\text{Exp}$ denotes the exponential distribution and $(\gamma_0, \gamma_1, \gamma_2, \gamma_3) = (-2.5, 1, 1, 0.5)$. 
\item Generate $T_1 = AT_1^1+(1-A)T_1^0, T_2 = AT_2^1+(1-A)T_2^0, C  = AC^1+(1-A)C^0.$
\item Generate $(X_1, X_2,\delta_1,\delta_2)$ from $(T_1, T_2, C)$ accordingly. 
\end{enumerate}
\begin{figure}[!htbp]
\centering
\includegraphics[width=0.6\linewidth]{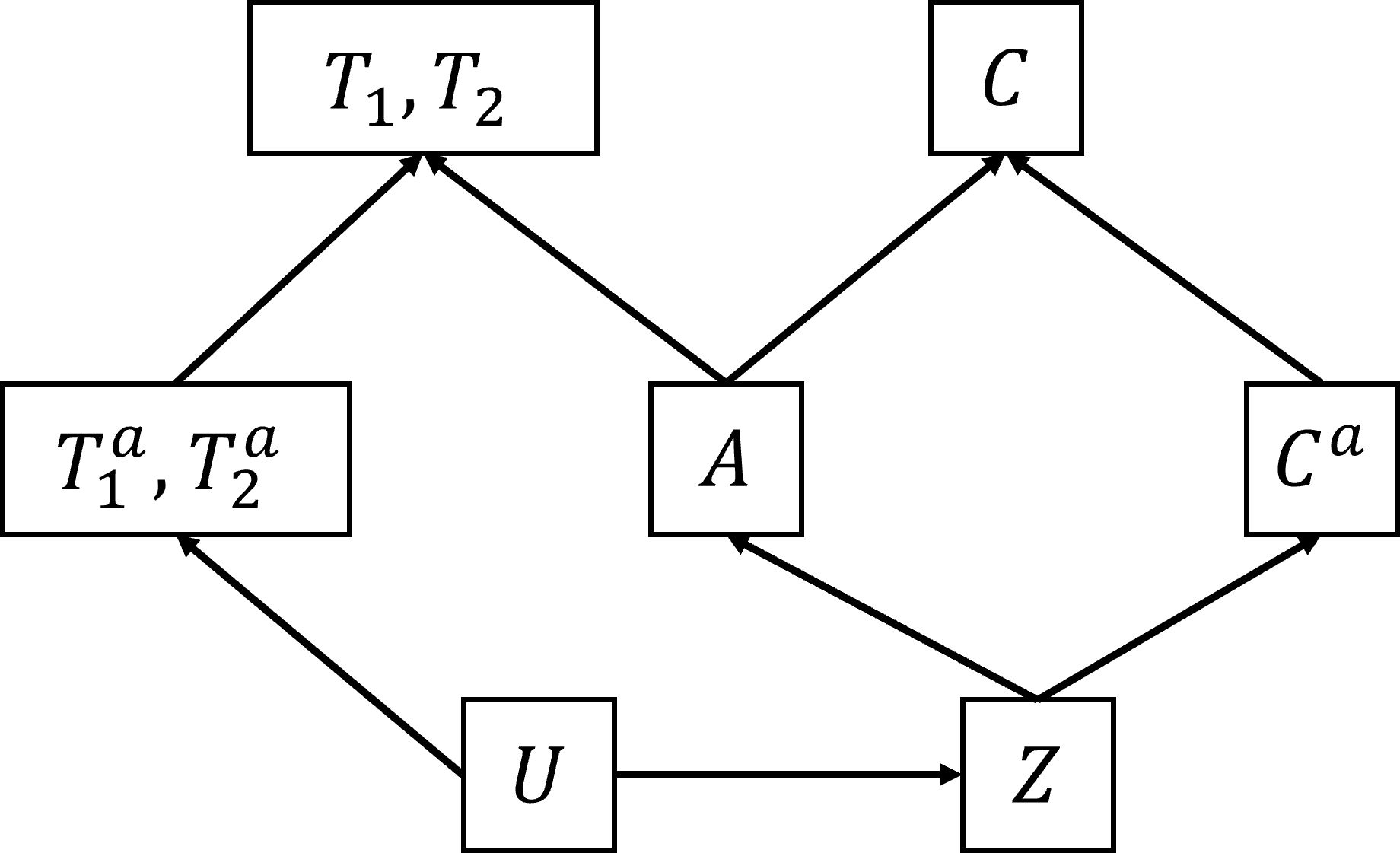}
\caption{Diagram of data generation mechanism in simulations} 
\label{fig:sim_dag}
\end{figure}
\clearpage

\subsection{Additional simulation results}\label{append:sim_results}
In our simulation studies, we considered as estimands $\operatorname{CIF}_1^a(3)$, $\operatorname{CIF}_2^a(3)$ and  $\operatorname{CIF}_3^a(3 | 0.5)$ for $a=0,1$, resulting in six estimands in total. The results for $\operatorname{CIF}_j^0, j=1,2,3$ are presented in the main text. The results for $\operatorname{CIF}_j^1, j=1,2,3$ are presented in Table \ref{append:table_sim}. 
\begin{table}
\caption{Simulation results. Models in red are misspecified: for the event time model, it does not follow a Cox specification based on the data generation mechanism; for the censoring model, Cox1 is the correct model $C\sim A+Z_1+Z_2$ and Cox2 is the misspecified model $C\sim A+Z_1^2+Z_2^2$; for the PS model, logit1 is the correct model $A\sim Z_1+Z_2$ and logit2 is the misspecified model $A\sim Z_1^2+Z_2^2$.}
\label{append:table_sim}
\centering
\begin{tabular}[t]{lllrlll}
\toprule
Estimand:truth & Method & Event Time/Censor-PS & Bias & SD & SE & CP(\%)\\
\midrule
$\text{CIF}_1^1(3)$: 0.431 & Causal- & RSF/(Cox1-gbm) & -0.002 & 0.033 & 0.032 & 95.4\\
 & AIPCW\textsuperscript{sc} & RSF/(Cox1-logit1) & -0.004 & 0.033 & 0.031 & 94.0\\
 &  & RSF/(RSF-gbm) & -0.001 & 0.032 & 0.032 & 95.9\\
 &  & \textcolor{red}{Cox}/(Cox1-logit1) & -0.004 & 0.034 & 0.033 & 93.9\\
 &  & \textcolor{red}{Cox}/(RSF-gbm) & 0.001 & 0.033 & 0.033 & 95.4\\
 &  & \textcolor{red}{Cox}/(RSF-logit1) & -0.003 & 0.034 & 0.032 & 94.3\\
 &  & \textcolor{red}{Cox}/(\textcolor{red}{Cox2}-\textcolor{red}{logit2}) & 0.017 & 0.033 & 0.032 & 91.7\\
  \cline{2-7}
 & IPW & -/(Cox1-logit1) & -0.005 & 0.035 & 0.033 & 93.8\\
 &  & -/(RSF-gbm) & 0.002 & 0.033 & 0.033 & 95.8\\
 &  & -/(RSF-logit1) & -0.004 & 0.034 & 0.033 & 94.6\\
 &  & -/(\textcolor{red}{Cox2}-\textcolor{red}{logit2}) & 0.021 & 0.034 & 0.033 & 90.6\\
  \cline{2-7}
 & Naive & - & 0.019 & 0.034 & 0.033 & 91.1\\
 & Full & - & -0.001 & 0.022 & 0.022 & 94.6\\
\addlinespace
$\text{CIF}_2^1(3)$: 0.388 & Causal- & RSF/(Cox1-gbm) & 0.000 & 0.034 & 0.034 & 95.5\\
 & AIPCW\textsuperscript{sc} & RSF/(Cox1-logit1) & 0.002 & 0.034 & 0.032 & 94.0\\
 &  & RSF/(RSF-gbm) & 0.001 & 0.034 & 0.034 & 95.1\\
 &  & \textcolor{red}{Cox}/(Cox1-logit1) & 0.000 & 0.035 & 0.034 & 94.9\\
 &  & \textcolor{red}{Cox}/(RSF-gbm) & -0.001 & 0.034 & 0.034 & 95.3\\
 &  & \textcolor{red}{Cox}/(RSF-logit1) & 0.002 & 0.035 & 0.034 & 94.6\\
 &  & \textcolor{red}{Cox}/(\textcolor{red}{Cox2}-\textcolor{red}{logit2}) & -0.014 & 0.034 & 0.033 & 92.3\\
  \cline{2-7}
 & IPW & -/(Cox1-logit1) & -0.001 & 0.035 & 0.034 & 95.4\\
 &  & -/(RSF-gbm) & -0.005 & 0.034 & 0.034 & 94.6\\
 &  & -/(RSF-logit1) & -0.001 & 0.035 & 0.033 & 94.6\\
 &  & -/(\textcolor{red}{Cox2}-\textcolor{red}{logit2}) & -0.030 & 0.032 & 0.031 & 82.1\\
  \cline{2-7}
 & Naive & - & -0.033 & 0.032 & 0.032 & 82.3\\
 & Full & - & 0.001 & 0.022 & 0.022 & 95.4\\
\addlinespace
$\text{CIF}_3^1(3|0.5)$: 0.527 & Causal- & RSF/(Cox1-gbm) & 0.012 & 0.054 & 0.056 & 96.2\\
 & AIPCW\textsuperscript{sc} & RSF/(Cox1-logit1) & 0.014 & 0.055 & 0.052 & 94.3\\
 &  & RSF/(RSF-gbm) & 0.011 & 0.054 & 0.057 & 95.5\\
 &  & \textcolor{red}{Cox}/(Cox1-logit1) & 0.002 & 0.054 & 0.052 & 94.6\\
 &  & \textcolor{red}{Cox}/(RSF-gbm) & -0.007 & 0.052 & 0.054 & 95.9\\
 &  & \textcolor{red}{Cox}/(RSF-logit1) & 0.001 & 0.053 & 0.052 & 95.3\\
 &  & \textcolor{red}{Cox}/(\textcolor{red}{Cox2}-\textcolor{red}{logit2}) & -0.041 & 0.049 & 0.049 & 85.5\\
  \cline{2-7}
 & IPW & -/(Cox1-logit1) & -0.005 & 0.060 & 0.059 & 94.5\\
 &  & -/(RSF-gbm) & -0.017 & 0.057 & 0.060 & 95.2\\
 &  & -/(RSF-logit1) & -0.002 & 0.058 & 0.056 & 94.3\\
 &  & -/(\textcolor{red}{Cox2}-\textcolor{red}{logit2}) & -0.121 & 0.058 & 0.058 & 42.7\\
 \cline{2-7}
 & Naive & - & -0.121 & 0.058 & 0.059 & 44.7\\
 & Full & - & 0.000 & 0.065 & 0.064 & 95.9\\
\bottomrule
\end{tabular}
\end{table}

\clearpage

\subsection{Application to HAAS Study}\label{append:haas}
We presented the plots of estimated treatment-specific risks and risk differences in the main text. The corresponding numeric results are presented in Table \ref{table:haas}.
\begin{table}[!h]
\centering
\caption{Estimated treatment-specific risks and risk differences with their 95\% confidence intervals for moderate impairment (MI), death without MI, and death after MI by 8 years\label{table:haas}}
\centering
\begin{tabular}[t]{lllll}
\toprule
 & Time & Non-heavy Drinking & Heavy Drinking & Risk Difference\\
\midrule
 & 5 & 0.133 (0.115, 0.152) & 0.151 (0.117, 0.185) & 0.017 (-0.020, 0.055)\\

 & 10 & 0.297 (0.274, 0.320) & 0.367 (0.319, 0.415) & 0.069 (0.015, 0.123)*\\

 & 15 & 0.414 (0.390, 0.439) & 0.480 (0.428, 0.532) & 0.066 (0.008, 0.123)*\\

\multirow[t]{-4}{*}{\raggedright\arraybackslash MI} & 20 & 0.546 (0.511, 0.580) & 0.568 (0.520, 0.616) & 0.022 (-0.039, 0.084)\\

 & 5 & 0.047 (0.037, 0.058) & 0.075 (0.048, 0.103) & 0.028 (-0.001, 0.057)\\

 & 10 & 0.198 (0.178, 0.219) & 0.235 (0.195, 0.274) & 0.037 (-0.009, 0.082)\\

 & 15 & 0.319 (0.295, 0.343) & 0.328 (0.282, 0.373) & 0.009 (-0.043, 0.060)\\

\multirow[t]{-4}{*}{\raggedright\arraybackslash Death without MI} & 20 & 0.398 (0.372, 0.424) & 0.380 (0.333, 0.427) & -0.018 (-0.073, 0.036)\\

 & 10 & 0.248 (0.204, 0.292) & 0.281 (0.196, 0.366) & 0.033 (-0.064, 0.129)\\

 & 15 & 0.703 (0.654, 0.752) & 0.749 (0.681, 0.818) & 0.046 (-0.040, 0.132)\\

\multirow[t]{-3}{*}{\raggedright\arraybackslash Death after MI} & 20 & 0.901 (0.848, 0.955) & 0.953 (0.904, 1.000) & 0.052 (-0.019, 0.123)\\
\bottomrule
\end{tabular}
\parbox{0.9\linewidth}{\small
$^{*}$ statistically significant  at $\alpha = 0.05$ level two-sided}
\end{table}

\subsection{Conditional Markov does not imply marginal Markov}\label{append:Markov}

We show that conditional Markov assumption does not imply marginal Markov assumption. 
Let 
$\lambda_3(t \mid t_1) = \lim _{\Delta \rightarrow 0} P(T_2 \in[t, t+\Delta) \mid  T_1=t_1,T_2 \geqslant t)/\Delta$ denote the  transition rate from the non-terminal event to the terminal event. Let 
$\lambda_3(t \mid t_1, Z) = \lim _{\Delta \rightarrow 0} P(T_2 \in[t, t+\Delta) \mid  T_1=t_1,T_2 \geqslant t, Z)/\Delta$ denote the corresponding transition rate conditional on $Z$. By conditional Markov assumption, we have
$\lambda_3(t| t_1, Z)=\lambda_3(t|Z)\mathbbm{1}(t>t_1) $. 
Now we compute $\lambda_3(t \mid t_1)$ based on $\lambda_3(t\mid t_1, Z)$. To facilitate the computation, let us suppose $\lambda_3(t,Z) = \lambda_0(t)\alpha(Z)$. 
Then when $ t>t_1$:
\begin{align*}
\lambda_3(t \mid t_1) & =\lim _{\Delta \to 0} \frac{1}{\Delta} \frac{P\{T_2 \in[t, t+\Delta) \mid T_1=t_1\}}{P(T_2 \geqslant t \mid T_1=t_1)}\\
& =\lim _{\Delta \to 0} \frac{1}{\Delta} \frac{E[P\{T_2 \in[t, t+\Delta) \mid T_1=t_1,Z\}]}{E[P(T_2 \geqslant t \mid T_1=t_1,Z)]} \quad \text{[by tower rule]}\\
& =\lim _{\Delta \to 0} \frac{1}{\Delta} \frac{E\left[\exp \left\{-\int_{t_1}^t \lambda_3(u \mid t_1, Z) d u\right\}-\exp \left\{-\int_{t_1}^{t+\Delta} \lambda_3(u \mid t_1, Z) d u\right\}\right]}{E\left[\exp \left\{-\int_{t_1}^t \lambda_3(u \mid t_1, Z) d u\right\}\right]} \\
& =\lim _{\Delta \to 0} \frac{1}{\Delta} \frac{E\left[\exp \left\{-\alpha(Z)\int_{t_1}^t \lambda_0(u) d u\right\}-\exp \left\{-\alpha(Z)\int_{t_1}^{t+\Delta} \lambda_0(u) d u\right\}\right]}{E\left[\exp \left\{-\alpha(Z)\int_{t_1}^t \lambda_0(u) d u\right\}\right]}
\end{align*}
Let  $g(a, b)=\int_a^b \lambda_0(u) d u$, $m(Z)=\exp \{-\alpha(Z)\}$. Then 
\begin{equation}\label{eq:markov}
\lambda_3(t \mid t_1) =\lim _{\Delta \to 0} \frac{1}{\Delta} \left[1-\frac{E\left\{m(Z)^{g(t_1,t)+g(t,t+\Delta)} \right\}}{E\left\{m(Z)^{g(t_1,t)} \right\}}\right]
\end{equation}
Note that \eqref{eq:markov} depends on $t_1$ in general.
For example, let $\alpha(Z)=Z$,  $\lambda_0(t)=1$ and $Z$ follow a uniform distribution from 0 to 1. 
Then clearly $\lambda_3(t \mid t_1, Z)=Z \mathbbm{1}(t>t_1)$ satisfies Markov assumption conditional on $Z$.
However for $t>t_1$,
\begin{align}
\lambda_3(t \mid t_1) &=\lim _{\Delta \to 0} \frac{1}{\Delta} \left[1-\frac{E\left\{\exp(-Z)^{(t+\Delta-t_1)} \right\}}{E\left\{\exp(-Z)^{(t-t_1)} \right\}}\right]  \nonumber\\
& =\lim_{\Delta \to 0}-\frac{t-t_1}{1-\exp(t_1-t)}\left\{\frac{1-\exp(-t-\Delta+t_1)}{t+\Delta-t_1}\right\}' \quad [\text{L'Hôpital's Rule]} \nonumber\\
& =\frac{1}{t-t_1}-\frac{1}{e^{t-t_1}-1} . \label{eq:markov_example}
\end{align}
Note that \eqref{eq:markov_example} clearly depends on $t_1$, which violates the marginal Markov assumption. Thus, conditional Markov does not imply marginal Markov.
\end{document}